\newcommand{\ifLong}[1]{#1}
\newcommand{\ifShort}[1]{}
\newcommand{\ifShortVspace}[1]{}
\newcommand{\nnull}{\bar n}
\newcommand{\bpmn}{\textsc{bpmn}\xspace}
\newcommand{\pl}[1][]{%
  P_{#1}%
}
\newcommand{\tr}[1][]{%
  T_{#1}
}
\newcommand{\pre}[1][()]{%
  \vphantom{i}^{\bullet}#1%
}
\newcommand{\post}[1][()]{%
  #1\vphantom{i}^{\bullet}%
}       
\newcommand{\M}[1][]{%
  \ifthenelse{\equal{#1}{'}\OR\equal{#1}{''}\OR\equal{#1}{'''}}
  {m#1}%
  {m_{#1}}%
}
\newcommand{\Net}{%
  \mathcal{N}%
}
\newcommand{\step}[1][]{%
  \ifthenelse{\equal{#1}{'}\OR\equal{#1}{''}}
  {\vec t\boldsymbol{#1}}
  {\vec t}
}
\newcommand{\fire}[1]{%
  \mathrel{[#1\rangle}%
}
\newcommand{\rl}{%
  \rho%
}
\newcommand{\R}{%
  \mathcal{R}%
}
\newcommand{\figref}[1]{Fig.~\ref{#1}}
\newcommand{\vect}[3][1]{%
  \langle #2_{#1},\dotsc,#2_{#3}\rangle
}
\newcommand{\vectup}[3][1]{%
  \langle #2^{#1},\dotsc,#2^{#3}\rangle
}
\newcommand{\xect}[3][1]{%
  \langle
  \csname#2\endcsname{#1},\dotsc,\csname#2\endcsname{#3}\rangle
}
\newcommand{\G}[1][]{%
  G_{#1}%
}
\newcommand{\game}{%
  \langle \N, \states, \acts, \probs, \pay
  \rangle
}
\newcommand{\rgame}{%
  \langle \N,\rstates, \racts,\rprobs, \pay
  \rangle
}
\newcommand{\racts}[1][]{%
  \ifthenelse{\equal{#1}{}}%
  {{\mathring A}}
  {{\mathring A}^{#1}}
}
\newcommand{\rstates}{%
  \mathring S%
}
\newcommand{\rprobs}{%
  \mathring q%
}
\newcommand{\strat}[2][{\plr}]{%
  \ifthenelse{\equal{#2}{}\AND\equal{#1}{\plr}}%
  {\pi}%
  {\pi^{#1}%
    \ifthenelse{\equal{#2}{}}%
    {}%
    {(#2,\act[#1])}%
  }%
}
\newcommand{\states}{%
  S%
}
\newcommand{\state}[1][]{%
  \ifthenelse{\equal{#1}{'}}%
  {s'}%
  {s_{#1}{}}%
}
\newcommand{\N}{%
  N%
}
\newcommand{\plr}[1][]{%
  \ifthenelse{\equal{#1}{'}}%
  {j}%
  {i_{#1}{}}%
}
\newcommand{\acts}[1][]{%
  \ifthenelse{\equal{#1}{}}%
  {{A}}
  {{A}^{#1}}
}
\newcommand{\act}[1][]{%
  \xact{#1}
}
\newcommand{\xact}[1]{%
  \ifthenelse{\equal{#1}{}}
  a%
  {a^{#1}}%
}
\newcommand{\fmluptext}[3]{%
  \{#1^{#2}\}_{#2\in#3}%
}
\newcommand{\sigs}[1][]{%
  \ifthenelse{\equal{#1}{}}%
  {\mathcal{M}}%
  {\mathcal{M}_{#1}}%
}
\newcommand{\sig}[2]{%
  M_{#2}^{#1}%
}
\newcommand{\m}[1][]{%
  \ifthenelse{\equal{#1}{}}
  {\vec x}%
  {{\vec x}^{#1}}%
}
\newcommand{\sigd}[1][]{%
   d_{#1}%
}
\newcommand{\device}[2]{%
  \left\{\langle\xxfml{sig}#1#2{\N},\sigd[#2]\rangle\right\}_{#2\in\mathbb{N}}
}
\newcommand{\xfml}[3]{%
  \{\csname#1\endcsname#2\}_{#2\in#3}%
}
\newcommand{\xxfml}[4]{%
  \{\csname#1\endcsname#2#3\}_{#2\in#4}%
}
\newcommand{\xpay}[1]{%
  \ifthenelse{\equal{#1}{}}%
  {u}
  {u^{#1}}
}
\newcommand{\pay}[1][]{%
  \xpay{#1}%
}
\newcommand{\probs}{%
  q
}
\newcommand{\prob}[3]{%
  \probs(#1 \mid {#2,#3})%
}
\newcommand{\rprob}[3]{%
  \rprobs(#1 \mid {#2,#3})%
}
\newcommand{\hist}[1]{%
  \obs{#1}{}%
}
\newcommand{\obs}[2]{%
  H_{#1}^{#2}(\mathcal{D})%
}
\newcommand{\PP}[3][\mathcal{D}]{%
  \mathbf{P}_{%
      #1\ifthenelse{\equal{#1}{}}{}{,}%
      #2,#3}%
}
\newcommand{\epay}[2][\plr]{%
  \bar\gamma^{#1}_{#2}
}
\newcommand{\defeq}{%
  \mathrel{{\mathop:}{=}} %
}
\newcommand{\msg}[2][msg]{%
  \node[rectangle,draw,inner sep=1pt,minimum size=0ex,fill=white,outer sep=0pt] (#1) at (#2) {\phantom{ai}};
  \draw[overlay,shorten <=.5pt,shorten >=.5pt] (#1.north west) -- (#1.center) -- (#1.north east);
}
\newcommand{\minitabular}[2][c]{%
  \begin{tabular}[c]{@{}#1@{}}
    #2
  \end{tabular}
}
\newcommand{\getxytikz}[3]{%
  \tikz@scan@one@point\pgfutil@firstofone#1\relax
  \edef#2{\the\pgf@x}%
  \edef#3{\the\pgf@y}%
}
\def\ncbararm{3ex}
\def\ncbarangle{90}
\tikzset{
  arm/.default=3ex,
  arm/.code={\def\ncbararm{#1}}, 
  angle/.default=0,
  angle/.code={\def\ncbarangle{#1}} 
}
\tikzset{
    ncbar/.style = {to path={
        let
            \p1=($(\tikztotarget)+(\ncbarangle:\ncbararm)$)
        in
            -- ++(\ncbarangle:\ncbararm) coordinate (firstncbarcorner)
            -- ($(\tikztotarget)!(firstncbarcorner)!(\p1)$) %
            coordinate[pos=.5] (ncbarmid)
            coordinate[pos=1] (secondncbarcorner)
            -- (\tikztotarget)\tikztonodes
    }}
}
\newcommand{\X}{%
  X%
}
\newcommand{\Xs}{%
  \mathcal{X}
}
\newcommand{\med}{Mrs.~Medina\xspace}%
\newcommand{\notethis}[2][authorX]{%
  \makebox[0pt][c]{\textcolor{gray}{\smash{\({}^{{}^{\leftrightarrow}}\)}}}%
  \marginnote{%
    \tiny {\ul{#1}}\\
    #2%
  }%
}
\renewcommand{\notethis}[2][authorX]{}
\newcommand{\thnote}[1]{%
  \notethis[TH]{#1}%
}
\newcommand{\iwnote}[1]{%
  \notethis[IW]{#1}%
}
\newcommand{\lpnote}[1]{%
  \notethis[LP]{#1}%
}
\providecommand{\thnote}[1]{%
}
\providecommand{\iwnote}[1]{%
}
\providecommand{\lpnote}[1]{%
}
\renewcommand{\vec}[1]{\ensuremath{\underbar{\ensuremath{#1}}}}
\newcommand\commentout[1]{}
\newcommand{\executeiffilenewer}[3]{%
\ifnum\pdfstrcmp{\pdffilemoddate{#1}}%
{\pdffilemoddate{#2}}>0%
{\immediate\write18{#3}}\fi%
}
\newcommand{%
\executeiffilenewer{.svg}{.pdf}%
{inkscape -z -D --file=.svg %
--export-pdf=.pdf --export-latex}%
\input{figures/.pdf_tex}%
}[1]{%
\executeiffilenewer{#1.svg}{#1.pdf}%
{inkscape -z -D --file=#1.svg %
--export-pdf=#1.pdf --export-latex}%
\input{figures/#1.pdf_tex}%
}
\begin{document}

\ifShort{
  \title{Incentive Alignment of Business Processes}
  \titlerunning{Incentive Alignment of Business Processes}
}
\ifLong{
    \title{Incentive Alignment of Business Processes: \\
      a game theoretic approach
       }
  \titlerunning{Incentive Alignment of Business Processes}
}
\author{%
  Tobias Heindel\orcidID{0000-0003-3371-8564}%
  \and%
  Ingo Weber\orcidID{0000-0002-4833-5921}%
}
%
\institute{%
  Chair of Software and Business Engineering, Technische Universitaet Berlin, Germany %
	\email{\{heindel,ingo.weber
    \}@tu-berlin.de}%
}
\maketitle              
\begin{abstract}
  Many definitions of business processes refer to business goals, value creation, or profits/gains of sorts.  Nevertheless, the focus of formal methods research on business processes,\iwnote{sollen wir statt ``research on business process correctness'' lieber ``formal research on business processes'' schreiben? Der Widerspruch zu den Zielen im vorigen Satz ist sonst nicht so klar.\\
    \ul{TH}\\
    oder ``Nevertheless, formal methods for checking business process correctness, like the well-known soundness property, typically consider the qualitative execution semantics of modeling languages.'' und dann unten ``quantifiable interest''\\
		\ul{IW} my point was to get away from the ``correctness'' and talk about formal BPM research in general.\\
		\ul{IW} made the change; let me know if you're uncomfortable with it.} like the well-known soundness property, lies on correctness with regards to execution semantics of modeling languages.  Among others, soundness requires proper completion of process instances.  However, the question of whether participants have any \reversemarginpar\thnote{quantifiable; IW: not necessary IMO}\emph{interest} in working towards completion (or in participating in the process)\iwnote{yes, no? Maybe not, because we don't answer the question completely?}
		has not been addressed as of yet.

  In this work, we investigate whether inter-organizational business processes give participants incentives for achieving the common business goals -- in short, whether incentives are aligned with the process.  In particular, fair behavior should pay off and efficient completion of tasks should be rewarded.  We propose a game-theoretic approach that relies on algorithms for solving stochastic games from the machine learning community.  We describe a method for checking incentive alignment of process models with utility annotations for tasks, which can be used for \emph{a priori} analysis of inter-organizational business processes.  Last but not least, we show that the soundness property corresponds to a special case of incentive alignment.
  \keywords{%
    incentive alignment 
		\and inter-organizational business processes
		\and collaboration \and choreography
                \and soundness property 
  }
\end{abstract}%


\section{Introduction}
\label{sec:intro}

Many definitions of business processes refer to business goals~\cite{Weske-book} or value creation~\cite{Fundamentals-book}, but whether process participants are actually incentivized to contribute to a process has not been addressed as yet. 
	For \emph{intra}-organizational processes, this question is less relevant; motivation to contribute is often based on loyalty, bonuses if the organization performs well, or simply that tasks in a process are part of one's job.
	Instead, economic modeling of intra-organizational processes often focuses on cost, e.g. in activity-based costing~\cite{AMA3rdEdition}, which can be assessed using model checking tools \cite{HerbertSharp12StochMC} or simulation~\cite{Cartelli:2014}. 

For \emph{inter}-organizational business processes, such indirect motivation cannot be assumed.
A prime example of misaligned incentives was the \$2.5B write-off in Cisco's supply chain in April 2001~\cite{NarayananAnanth2004SupplyChainIA}: 
success of the overall supply chain was grossly misaligned with the incentives of individual participants.
(This happened although several game theoretic approaches for analyzing incentive structures are available for the case of supply chains~\cite{cachon2004game}.)
Furthermore, modeling incentives accurately is actually possible in cross-organizational processes, e.g., based on contracts and agreed-upon prices.
%
%
Now, with the advent of \emph{blockchain} technology~\cite{2019-Blockchain-Book}, it is possible to execute cross-organizational business processes or choreographies as smart contracts~\cite{2018-Mendling-TMIS,2016-Weber-BPM}. 
	The blockchain serves as a neutral, participant-independent computational infrastructure, and as such enables collaboration across organizations even in situations characterized by a \emph{lack of trust} between participants~\cite{2016-Weber-BPM}. 
	However, as there is no central role for oversight, it is important that incentives are properly designed in such situations, among other reasons to avoid unintended --possibly devastating-- results, like those encountered by Cisco.
In fact, a main goal of the Ethereum blockchain is, according to its founder Vitalik Buterin, to create ``a better world by aligning incentives''\footnote{\url{https://www.ikiguide.com/ethereum/}, accessed 8-3-2020}.
In this paper, we present a principled framework %
for incentive alignment of inter-organizational business processes. %
We consider \bpmn models with suitable annotation concerning the utility\footnote{%
  We shall use utility functions in the sense of von Neumann and Morgenstern~\cite{morgenstern1953theory}. %
} of activities,  %
very much in the spirit of activity-based costing (\textsc{abc}) \cite[Chapter~5]{AMA3rdEdition}. %
In short, %
fair behavior should pay off %
and participants should be rewarded for %
efficient completion of process instances.
In more detail, %
we shall consider \bpmn models as
stochastic games~\cite{Shapley1095StochGames} %
and formalize incentive alignment as ``good'' equilibria of %
the resulting game. %
Which equilibria are the desirable ones depends on the business goals %
w.r.t.\ which we want align incentives. %
In the present paper, %
we focus on \emph{proper completion} and \emph{liveness} of activities. %
Interestingly, the soundness property~\cite{Aalst97soundness} will be rediscovered as %
the special case of incentive alignment %
within a single organization that rewards completion of every activity. %

  The overall contribution of the paper is %
  a framework for incentive alignment of
  business process models, 
	particularly in inter-organizational settings. 
	Our approach is based on game theory and %
  inspired by advances on the solution of stochastic games 
  from the machine learning community,
  which has developed algorithms for %
  the practical computation of Nash~\cite{PrasadEtAl15NashEqu} and correlated equilibria~\cite{MacDermedIsbell2009NIPS,MacDermedIsbell2011AIII}. %
  The framework focuses on checking incentive alignment as %
  an \emph{a priori} analysis of business processes %
  specified as \bpmn models with activity-based utility annotations. 
  Specifically, we:%
  \begin{enumerate} 
  \item describe a principled method for translating \bpmn-models with activity-based costs to stochastic games \cite{Shapley1095StochGames}
  \item propose a notion of incentive alignment that we prove to be a conservative extension of Van der Aalst's soundness property~\cite{Aalst97soundness}, %
  \item illustrate 
    the approach with %
    \ifLong{an}\ifShort{a simplified} %
    order-to-cash (\textsc{o\oldstylenums{2}c}) process. %
  \end{enumerate} 

  We pick up the idea of incentive alignment for supply chains \cite{cachon2004game} %
  and set out to apply it in the realm of inter-organizational business processes. \iwnote{Oben habe ich das umformuliert: unser Hauptaugenmerk liegt auf inter-org; aber eigentlich geht es ja auch, wenn die verschiedenen Rollen intra-org sind. Da ist es wahrscheinlich weniger sinnvoll, siehe oben; aber technisch betrachtet wäre es möglich, oder? NACHTRAG%
  : wenn wir das anders darstellen wollen, sollten wir irgendwas along the lines of ``wir reden immer von inter-org, schliessen damit aber ggf sinnvolle Fälle von intra-org nicht aus.'' Oder aber konsequent im Papier ändern -- inter-org wird an einigen Stellen verwendet.
	   \ul{TH} in der Tat stimmt es, dass intra-org. mit mehreren internen Rollen auch von incentive alignment profitieren kann, (um das Managment zu entlasten?!)\\
			\ul{IW} Warum auch immer. Wir sagen nicht dass es Sinn macht, nur dass es geht?
			\ul{TH} Klar, so 'ne Bemerkung sollte noch passen und ist sicherlich zweckdienlich 
	} %
From a technical point of view, %
we are interested in extending the model checking tools for cost analysis \cite{HerbertSharp12StochMC} %
  for \bpmn process models to proper collaborations, %
  very much like the model checker \textsc{prism}  has been extended %
  from Markov decision processes to games~\cite{kwiatkowska2016prism}. %
  Last but not least, %
  we put importance on the connection to established concepts from %
  the business process management community: %
  not only do we keep the spirit of the soundness property; %
  we even obtain the rigorous result %
  that incentive alignment is a conservative extension of the soundness property.

  The remainder of the paper is structured as follows. We introduce concepts and notations in \autoref{sec:preliminaries}. On this basis, we formulate two versions of incentive alignment in \autoref{sec:incentive-alignment}. Finally, we draw conclusions in \autoref{sec:concl}.
  \ifLong{The proof of the main theorem can be found in Appendix~\ref{apx:proof}.}
  \ifShort{The proof of the main theorem can be found in the extended version~\cite{BPM2020HeindelWeberExt}.}



\section{Game theoretic concepts and the Petri net tool chest}
\label{sec:preliminaries}
We now introduce %
the prerequisite concepts for stochastic games~\cite{Shapley1095StochGames} and elementary net systems~\cite{RozenbergE96ENS}. %
The main benefit of using a game theoretic approach %
is a short list of candidate  definitions of equilibrium, %
which make precise the idea of a ``good strategy'' %
for rational actors that compete as players of a game. %
We shall require the following two properties of an equilibrium: 
{(1)}~no player can benefit from unilateral deviation from the ``agreed'' strategy %
and %
{(2)}~players have the possibility to base their moves on %
information from a single (trusted) mediator. %
The specific instance that we shall use are
\emph{correlated equilibria}~\cite{Aumann1974CorEq,correlatedEqu87Aumann}
as studied by Solan and Vieille~\cite{Solan2002CorEq}.\footnote{%
  Nash equilibria are a special case, %
  which however have drawbacks %
  that motivate Aumann's work on the more general
  correlated equilibria \cite{Aumann1974CorEq,correlatedEqu87Aumann}. %
}%
\ We take ample space to review the latter two concepts, 
followed by a short summary of the background on Petri nets. %

We use the following basic concepts and notation. %
The cardinality and the powerset of a set~\(M\) are denoted by~\(|M|\) %
and \(\wp M\), respectively. %
The set of real numbers is denoted by~\(\mathbb{R}\) %
and \([0,1]\subseteq \mathbb{R}\) is the unit interval. %
A~probability distribution over a finite or countably infinite set~\(M\) is %
a function \(p \colon M \to [0,1]\) whose %
values are non-negative and sum up to~\(1\), %
in symbols \(\sum_{m\in M}p(m) = 1\). %
The set of all  probability distributions over a set~\(M\) is %
denoted by~\(\Delta(M)\). %

\begin{figure}[thb]
  \centering
  \newlength{\custdist}\setlength{\custdist}{25ex}
  \newlength{\suppdist}\setlength{\suppdist}{27ex}
  \newcommand{\secretlabel}[1]{\smash{\makebox[0pt][l]{\rotatebox{45}{\colorbox{orange}{#1}}}}}
  \renewcommand{\secretlabel}[1]{}
\begin{ifshort}
  \scriptsize
    {\begin{tikzpicture}[minimum size=5ex,>=Triangle,thin,xscale=1.5]%

        \node[overlay,opacity=0,fill=white,draw,%
        minimum size=4.99ex,StartEvent,%
        ] (startCustomer) at (1,0)
        {\secretlabel{startCustomer}}; 
        \node[fill=white,draw,%
        minimum size=4.99ex,StartEvent,%
        label={[name=sendOrderL]above:{\minitabular{
            }}},
        right = 3.3ex of startCustomer%
        ] (orderReq) {\secretlabel{orderReq}};

        \node[fill=white,MessageIntermediateThrowEvent,%
        right = 3.3ex of orderReq%
        ] (waitAvlb) {\secretlabel{waitAvlb}};

        \node[overlay,opacity=0,fill=white,draw,%
        minimum size=4.99ex,MessageStartEvent,%
        label={[name=receiveOrderL,overlay,opacity=0]below:{\minitabular{receive\\order}}},%
        below = \suppdist+\custdist of startCustomer
        ] (startSupplier) {\secretlabel{startSupplier}};

        \node[overlay,opacity=0,rounded corners,draw,rectangle,fill=white,%
        right = 3.3ex of startSupplier,
        rotate=-90,
        anchor=225
        ]
        (checkStock)
        {%
          \begin{tabular}[c]{@{}c@{}}
            check 
            stock
          \end{tabular}
        };
        
        \node[fill=white,draw,%
        minimum size=4.99ex,MessageStartEvent,
        right = 3.3ex of checkStock.135%
        ] (acceptableReq) {\secretlabel{acceptableReq}};
        \node[overlay,opacity=0,fill=none,minimum size=4.99ex,MessageEndEvent,draw,%
        above = 3.3ex of acceptableReq,%
        ]
        (nostock)
        {\secretlabel{nostock}};
        \node[overlay,opacity=0,fill=white,minimum size=4.99ex,MessageIntermediateThrowEvent,fill=white,%
        right = 3.3ex of acceptableReq,%
        ]
        (instock)
        {\secretlabel{instock}};

        \node[overlay,opacity=0,%
        minimum size=4.99ex,MessageIntermediateCatchEvent,fill=white,%
        below = 3.3ex of waitAvlb] (reqRej) %
        {\secretlabel{reqRej}};
        \node[overlay,opacity=0,minimum size=4.99ex,EndEvent,%
        below = 3.3ex of startCustomer,%
        label={[overlay,opacity=0,name=unlucky,anchor=160,overlay,outer sep=0pt,inner sep=0pt]south:{rejected}}%
        ]
        (reqRejEnd)
        {\secretlabel{reqRejEnd}};
        \node[overlay,opacity=0,minimum size=4.99ex,MessageIntermediateCatchEvent,fill=white, %
        right = 3.3ex of waitAvlb
        ]
        (reqAcc) {\secretlabel{reqAcc}};
        \node[fill=white,ParallelGateway,
        right = 0.0001ex of reqAcc]
        (waitAndSee)
        {\secretlabel{waitAndSee}};

        \node[fill=white,%
        minimum size=4.99ex,MessageIntermediateCatchEvent,fill=white,%
        below = 3.3ex of waitAndSee,%
        ] (recInvoice) {\secretlabel{recInvoice}};

        \node[fill=white,ParallelGateway,
        right = 0.0001ex of instock
        ]
        (invoiceAndShip)
        {};
        \node[minimum size=4.99ex,MessageIntermediateThrowEvent,fill=white,
        above = 3.3ex of invoiceAndShip]
        (sendInvoice)
        {\secretlabel{sendInvoice}};
        \node[minimum size=4.99ex,MessageIntermediateThrowEvent,fill=white,
        below = 3.3ex of invoiceAndShip]
        (reqShipper)
        {\secretlabel{reqShipper}};
        \node[fill=white,ParallelGateway,
        right = 4*3.3ex of invoiceAndShip
        ]
        (invoicedAndShipped)
        {\secretlabel{invoicedAndShipped}};
        \path (invoicedAndShipped) -- ++(-1*3.3ex,-4*3.3ex)
        node[fill=white,rounded corners,draw,
        ]
        (doShip){ship\secretlabel{doShip}};


        \path (reqShipper) -- ++([yshift=\suppdist,xshift=4*3.3ex]0,0)
        node[draw,fill=white,minimum size=4.99ex,MessageStartEvent,
        ]
        (startShipper)
        {\secretlabel{startShipper}};



        \node[anchor=west,fill=white,rounded corners,draw,
        right = 3.3ex of startShipper.east,%
        rotate=90
        ]
        (pickUp)
        {pick up\secretlabel{pickUp}};
        \node[anchor=west,fill=white,rounded corners,draw,
        above = 3.3ex of pickUp.east,%
        rotate=90,%
        anchor = west
        ]
        (deliver)
        {deliver\secretlabel{deliver}};

        
        \path (waitAndSee.north) -- ++([yshift=3.3ex,xshift=1.5ex]0,0)
        node[fill=white,rounded corners,draw,%
        anchor=west
        ]
        (receive)
        {receive};
        \path (receive.east) -- ++ ([xshift=1.5ex,yshift=-3.3ex]0,0)
        node[fill=white,ParallelGateway,anchor=north] 
        (waitedAndSeen) {\secretlabel{waitedAndSeen}};
        \node[fill=white,rounded corners,draw,%
        rotate=-90,%
        right = 3.3ex of waitedAndSeen.east,%
        anchor=south
        ]
        (check)
        {check};
        \node[fill=white,ExclusiveGateway,
        right = 3.3ex of check.north
        ]
        (decide)
        {\secretlabel{decide}};
        \path (decide) -- ++ ([xshift=3*3.3ex,yshift=-2.5*3.3ex]0,0)
        node[minimum size=4.99ex,MessageIntermediateThrowEvent,fill=white
        ]
        (refuse)
        {\secretlabel{refuse}};
        \node[fill=white,rounded corners,draw,
        right= 5.3ex of refuse
        ]
        (sendBack)
        {\minitabular{send\\back}\secretlabel{sendBack}};

        \path (decide.north) -- ++([yshift=3.3ex,xshift=2*3.3ex]0,0)
        node[minimum size=4.99ex,MessageIntermediateThrowEvent,fill=white]
        (preHappy)
        {\secretlabel{preHappy}};
        \coordinate 
        [
        right = 2*3.3ex of preHappy%
        ]
        (Happy)
        ;
        {
          \node[minimum size=4.99ex,SignalIntermediateThrowEvent,fill=white,
          ] at (Happy) {};
          \node[minimum size=4.99ex,SignalIntermediateThrowEvent,
          label = below:{\ul{payment}}
          ]
          (happy)
          at (Happy)
          {};
        }
        
        \node[minimum size=4.99ex,EndEvent,
        right = 4*3.3ex of happy,
        label={[name=happyl]below:{success}}
        ]
        (happyEnd)
        {\secretlabel{happyEnd}};

        \path (sendBack) -- (sendBack.east -| happyEnd.south)
        node[minimum size=4.99ex,EndEvent,
        label={[name=nosucc]above:{no success }}]
        (unhappy)
        {};


        \node[fill=white,ExclusiveEventBasedGateway,
        right = 3.3ex of deliver.south  
        ]
        (askCustomer)
        {\secretlabel{askCustomer}};
        \node[minimum size=4.99ex,MessageIntermediateCatchEvent,fill=white,
        below right = 3.3ex of askCustomer,
        ]
        (getReceipt)
        {\secretlabel{getReceipt}};
        \node[minimum size=4.99ex,MessageEndEvent,
        below  = 3.3ex of getReceipt,
        label=left:{ok}
        ]
        (isOK)
        {\secretlabel{isOK}};

        \node[fill=white,minimum size=4.99ex,MessageIntermediateCatchEvent,
        right = 2*3.3ex of askCustomer]
        (damageReported)
        {\secretlabel{damageReported}};


        \node[fill=white,draw,rounded corners,%
        right=5.3ex of damageReported,%
        rotate=-90,%
        anchor=south,
        allow upside down]
        (pickBack)
        {pick up\secretlabel{pickBack}};
        \node[fill=white,rounded corners,draw,
        below = 3.3ex of pickBack.east,%
        rotate=-90,%
        anchor=west,%
        allow upside down]
        (return)
        {\minitabular{return}\secretlabel{return}};
        \node[minimum size=4.99ex,MessageEndEvent,
        right = 2*3.3ex of return.north
        ]
        (reportDamage)
        {\secretlabel{reportDamage}};


        \node[fill=white,ExclusiveEventBasedGateway,
        right = 3.3ex of invoicedAndShipped
        ]
        (awaitCustomerResponse)
        {\secretlabel{awaitCustomerResponse}};


        \node[minimum size=4.99ex,MessageIntermediateCatchEvent,fill=white,%
        below = \suppdist-5*3.3ex of reportDamage]
        (trash)
        {\secretlabel{trash}};
        \node[rounded corners,fill=white,draw,%
        left = 3.5*3.3ex of trash%
        ]
        (takeBack){%
          receive\secretlabel{takeBack}
        };
        \node[overlay,opacity=0,rounded corners,fill=white,%
        below = 3.3ex of takeBack,%
        draw
        ]
        (writeOff)
        {write off\secretlabel{writeOff}};
        \node[minimum size=4.99ex,EndEvent,
        right = 3.3ex of writeOff,
        label={[name=ldamage]right:{damage}}
        ]
        (damageWrittenOff)
        {\secretlabel{damageWrittenOff}};

        \path (awaitCustomerResponse.south) -- ++([xshift=2ex,yshift=-2*3.3ex]0,0)
        node[fill=white,minimum size=4.99ex,MessageIntermediateCatchEvent,
        anchor=west
        ]
        (receipt)
        {\secretlabel{receipt}};
        \coordinate[
        right=4*3.3ex of receipt
        ]
        (waitMoney)
        {}; 
        \node[
        minimum size=4.99ex,SignalIntermediateCatchEvent,fill=white,
        label = below:{\ul{payment}}
        ]
        (waitMoney)
        at (waitMoney)
        {};
        \node[minimum size=4.99ex,EndEvent,
        right = 2*3.3ex of waitMoney,
        label={[name=lsold]right:{sold}}
        ]
        (sold)
        {\secretlabel{sold}};

        \foreach \u/\v in {%
          check/decide,%
          orderReq/waitAvlb,%
          waitAvlb/waitAndSee,%
          waitAndSee/recInvoice,%
          waitedAndSeen/check,%
          happy/happyEnd,%
          refuse/sendBack,%
          sendBack/unhappy,%
          preHappy/happy,%
          receipt/waitMoney,%
          acceptableReq/invoiceAndShip,%
          invoiceAndShip/sendInvoice,%
          invoiceAndShip/reqShipper,%
          takeBack/trash,%
          waitMoney/sold,%
          invoicedAndShipped/awaitCustomerResponse,%
          deliver/askCustomer,%
          getReceipt/isOK,%
          askCustomer/damageReported,%
          damageReported/pickBack,%
          pickBack/return,%
          return/reportDamage,%
          pickUp/deliver%
        }
        \draw[->] (\u) -- (\v);

        \foreach \u/\v in {%
          decide/preHappy,%
          decide/refuse,%
          waitAndSee/receive,%
          awaitCustomerResponse/takeBack,%
          awaitCustomerResponse/receipt,%
          askCustomer/getReceipt,%
          startShipper/pickUp,%
          reqShipper/doShip%
        }
        \draw[->] (\u) |- (\v);

        \foreach \u/\v in {%
          doShip/invoicedAndShipped,%
          receive/waitedAndSeen,%
          recInvoice/waitedAndSeen,%
          sendInvoice/invoicedAndShipped%
        }
        \draw[->] (\u) -| (\v);

        \foreach \u/\v/\c/\p/\d in
        {
          waitAvlb/acceptableReq/orderCoord/.4/--,%
          sendInvoice/recInvoice/invoiceCoord/.5/--,%
          reportDamage/trash/trashCoord/.65/--,%
          sendBack.260/sendBack.260|-pickBack.west/nothing/.5/--%
        }
        \draw[dashed,shorten <=-2pt,{Circle[fill=white]}-{Triangle[fill=white]},rounded corners]
        (\u) \d (\v) coordinate[pos=\p] (\c);


        \draw[->](trash.south) to[ncbar,angle=-90,arm=1ex] (damageWrittenOff.north);
        \draw[dashed,shorten <=-2pt,{Circle[fill=white]}-{Triangle[fill=white]}]
        (refuse) to[ncbar,angle=-90,arm=.32*\custdist,rounded corners] (damageReported) ;
        \coordinate (refusalCoord) at (ncbarmid);
        \draw[dashed,shorten <=-2pt,{Circle[fill=white]}-{Triangle[fill=white]}]
        (return) -| 
        (takeBack.120) ;
        \draw[dashed,shorten <=-2pt,{Circle[fill=white]}-{Triangle[fill=white]}]
        (reqShipper) to [ncbar,arm=8.5ex,angle=0,rounded corners] (startShipper) ;%
        \path (firstncbarcorner) -- (secondncbarcorner) coordinate[pos=.75] (shipThis);
        \draw[dashed,shorten <=-2pt,{Circle[fill=white]}-{Triangle[fill=white]}]
        (deliver.east) to[ncbar,angle=90,arm=3.3ex,rounded corners] (receive);
        \draw[dashed,shorten <=-2pt,{Circle[fill=white]}-{Triangle[fill=white]}]
        (doShip.60) to [ncbar,angle=90,arm=8*3.3ex,rounded corners=2pt] (pickUp.west);
        \draw[dashed,shorten <=-2pt,{Circle[fill=white]}-{Triangle[fill=white]}]
        (isOK.south) to[ncbar,angle=-90,arm=4*3.3ex,rounded corners] (receipt);
        \coordinate (receiptCoord) at (ncbarmid);
        \draw[dashed,shorten <=-2pt,{Circle[fill=white]}-{Triangle[fill=white]}]
        
        (preHappy) to[ncbar,angle=-90,arm=7*3.3ex,rounded corners] (getReceipt);
        \coordinate (ShipReceiptCoord) at (ncbarmid);
        \msg[order]{orderCoord};
        \node[anchor=east] at (order.west){\minitabular{order\\ goods}};
        \msg[invoice]{invoiceCoord};
        \node[anchor=east
        ] (invoiceLabel) at (invoice.west){invoice~};
        \msg[daReceipt]{receiptCoord};
        \node[above=0ex of daReceipt] {{receipt~~~}};
        \msg[custAck]{ShipReceiptCoord};
        \node[left=0ex of custAck] {receipt};
        \msg[shippingReq]{shipThis};
        \node[anchor=east] at (shippingReq.west){\minitabular{postage\\fee}};
        \msg[refusal]{refusalCoord};
        \node[anchor=west] at (refusal.north east){refusal};
        \msg[damagereportletter]{trashCoord};
        \node[left= 0ex of damagereportletter] {damage report};

        
        \begin{pgfonlayer}{background}
          \node [fill=lightgray,rectangle,draw,fit={([yshift=-1ex]recInvoice.south)
            (nosucc)(orderReq)(happyl)(refuse)(sendOrderL)(receive)(happyEnd)([yshift=-3pt]sendBack.south)}] (pool1) {};
          \node[rotate=90,anchor=south,outer sep=0pt] (label1) at (pool1.west){Customer};
          \node [fit={(pool1.south west) (pool1.north west) (label1)},inner sep=0pt,rectangle,draw,thick] {};
        \end{pgfonlayer}
        \begin{pgfonlayer}{background}
          \node [fill=lightgray,rectangle,draw,fit={([xshift=-5.00ex]startShipper.south west)([yshift=-3pt]startShipper.south)(deliver)(return)(reportDamage)(isOK)(pickBack)([xshift=2pt]pickBack.north)}] (pool2) {};
          \node[rotate=90,anchor=south,outer sep=0pt] (label2) at (pool2.west){Shipper};
          \node [fit={(pool2.south west) (pool2.north west) (label2)},inner sep=0pt,rectangle,draw,thick] {};
        \end{pgfonlayer}
        \begin{pgfonlayer}{background}
          \node [fill=lightgray,rectangle,draw,fit={(acceptableReq)(sendInvoice)(sold)(lsold)(trash)(doShip)(ldamage)([yshift=3pt]takeBack.north) }] (pool3) {};
          \node[rotate=90,anchor=south,outer sep=0pt] (label3) at (pool3.west){Supplier};
          \node [fit={(pool3.south west) (pool3.north west) (label3)},inner sep=0pt,rectangle,draw,thick] {};
        \end{pgfonlayer}

        \node[anchor=south west,rotate=90,,outer sep=0pt,inner sep=0pt] (checkOk) at (decide.north) {\tiny OK!};
        \node[anchor=north west,rotate=-90,outer sep=0pt,inner sep=0pt] (checkNotOk) at (decide.south) {\tiny damaged!};
        
        
      \end{tikzpicture}
    }
\end{ifshort}
\begin{iflong}
    \scriptsize
    {\begin{tikzpicture}[minimum size=5ex,>=Triangle,thin]%

        \node[fill=white,draw,%
        minimum size=4.99ex,StartEvent,%
        ] (startCustomer) at (1,0)
        {\secretlabel{startCustomer}}; 
        \node[fill=white,%
        minimum size=4.99ex,MessageIntermediateThrowEvent,%
        label={[name=sendOrderL]above:{\minitabular{send\\order}}},
        right = 3.3ex of startCustomer%
        ] (orderReq) {\secretlabel{orderReq}};

        \node[fill=white,ExclusiveEventBasedGateway,%
        right = 3.3ex of orderReq%
        ] (waitAvlb) {};

        \node[fill=white,draw,%
        minimum size=4.99ex,MessageStartEvent,%
        label={[name=receiveOrderL]below:{\minitabular{receive\\order}}},%
        below = \suppdist+\custdist of startCustomer
        ] (startSupplier) {\secretlabel{startSupplier}};

        \node[rounded corners,draw,rectangle,fill=white,%
        right = 3.3ex of startSupplier,
        rotate=-90,
        anchor=225
        ]
        (checkStock)
        {%
          \begin{tabular}[c]{@{}c@{}}
            check 
            stock
          \end{tabular}
        };
        
        \node[fill=white,ExclusiveGateway,%
        right = 3.3ex of checkStock.135%
        ] (acceptableReq) {};
        \node[fill=none,minimum size=4.99ex,MessageEndEvent,draw,%
        above = 3.3ex of acceptableReq,%
        label=left:{
          \begin{tabular}[c]{@{}c@{}}
            out of\\
            stock
          \end{tabular}
        }
        ]
        (nostock)
        {\secretlabel{nostock}};
        \node[fill=white,minimum size=4.99ex,MessageIntermediateThrowEvent,fill=white,%
        right = 3.3ex of acceptableReq,%
        label=below:{confirm}
        ]
        (instock)
        {};

        \node[%
        minimum size=4.99ex,MessageIntermediateCatchEvent,fill=white,%
        below = 3.3ex of waitAvlb] (reqRej) %
        {\secretlabel{reqRej}};
        \node[minimum size=4.99ex,EndEvent,%
        below = 3.3ex of startCustomer,%
        label={[name=unlucky,anchor=160,overlay,outer sep=0pt,inner sep=0pt]south:{rejected}}%
        ]
        (reqRejEnd)
        {\secretlabel{reqRejEnd}};
        \node[minimum size=4.99ex,MessageIntermediateCatchEvent,fill=white, %
        right = 3.3ex of waitAvlb
        ]
        (reqAcc) {\secretlabel{reqAcc}};
        \node[fill=white,ParallelGateway,
        right = 3.3ex of reqAcc]
        (waitAndSee)
        {\secretlabel{waitAndSee}};

        \node[fill=white,%
        minimum size=4.99ex,MessageIntermediateCatchEvent,fill=white,%
        below = 3.3ex of waitAndSee,%
        ] (recInvoice) {\secretlabel{recInvoice}};

        \node[fill=white,ParallelGateway,
        right = 3.3ex of instock
        ]
        (invoiceAndShip)
        {};
        \node[minimum size=4.99ex,MessageIntermediateThrowEvent,fill=white,
        above = 3.3ex of invoiceAndShip]
        (sendInvoice)
        {\secretlabel{sendInvoice}};
        \node[minimum size=4.99ex,MessageIntermediateThrowEvent,fill=white,
        below = 3.3ex of invoiceAndShip]
        (reqShipper)
        {\secretlabel{reqShipper}};
        \node[fill=white,ParallelGateway,
        right = 3*3.3ex+1ex of invoiceAndShip
        ]
        (invoicedAndShipped)
        {\secretlabel{invoicedAndShipped}};
        \path (invoicedAndShipped) -- ++(-2*3.3ex,-4*3.3ex)
        node[fill=white,rounded corners,draw,
        ]
        (doShip){ship\secretlabel{doShip}};


        \path (reqShipper) -- ++([yshift=\suppdist,xshift=4*3.3ex]0,0)
        node[draw,fill=white,minimum size=4.99ex,MessageStartEvent,
        ]
        (startShipper)
        {\secretlabel{startShipper}};



        \node[anchor=west,fill=white,rounded corners,draw,
        right = 3.3ex of startShipper.east,%
        rotate=90
        ]
        (pickUp)
        {pick up\secretlabel{pickUp}};
        \node[anchor=west,fill=white,rounded corners,draw,
        above = 3.3ex of pickUp.east,%
        rotate=90,%
        anchor = west
        ]
        (deliver)
        {deliver\secretlabel{deliver}};

        
        \path (waitAndSee.north) -- ++([yshift=3.3ex,xshift=1.5ex]0,0)
        node[fill=white,rounded corners,draw,%
        anchor=west
        ]
        (receive)
        {receive};
        \path (receive.east) -- ++ ([xshift=1.5ex,yshift=-3.3ex]0,0)
        node[fill=white,ParallelGateway,anchor=north] 
        (waitedAndSeen) {\secretlabel{waitedAndSeen}};
        \node[fill=white,rounded corners,draw,%
        rotate=-90,%
        right = 3.3ex of waitedAndSeen.east,%
        anchor=south
        ]
        (check)
        {check};
        \node[fill=white,ExclusiveGateway,
        right = 3.3ex of check.north
        ]
        (decide)
        {\secretlabel{decide}};
        \path (decide) -- + ([xshift=3.5*3.3ex,yshift=-2.5*3.3ex]0,0)
        node[minimum size=4.99ex,MessageIntermediateThrowEvent,fill=white
        ]
        (refuse)
        {\secretlabel{refuse}};
        \node[fill=white,rounded corners,draw,
        right= 3.3ex of refuse
        ]
        (sendBack)
        {\minitabular{send\\back}\secretlabel{sendBack}};

        \path (decide.north) -- ++([yshift=3.3ex,xshift=2*3.3ex]0,0)
        node[minimum size=4.99ex,MessageIntermediateThrowEvent,fill=white]
        (preHappy)
        {\secretlabel{preHappy}};
        \coordinate 
        [
        right = 2*3.3ex of preHappy%
        ]
        (Happy)
        ;
        {
          \node[minimum size=4.99ex,SignalIntermediateThrowEvent,fill=white,
          ] at (Happy) {};
          \node[minimum size=4.99ex,SignalIntermediateThrowEvent,
          label = below:{\ul{payment}}
          ]
          (happy)
          at (Happy)
          {};
        }
        
        \node[minimum size=4.99ex,EndEvent,
        right = 3*3.3ex of happy,
        label={[name=happyl]below:{success}}
        ]
        (happyEnd)
        {\secretlabel{happyEnd}};

        \path (sendBack) -- (sendBack.east -| happyEnd.south)
        node[minimum size=4.99ex,EndEvent,
        label=above:{no success}]
        (unhappy)
        {};


        \node[fill=white,ExclusiveEventBasedGateway,
        right = 3.3ex of deliver.south  
        ]
        (askCustomer)
        {\secretlabel{askCustomer}};
        \node[minimum size=4.99ex,MessageIntermediateCatchEvent,fill=white,
        below right = 3.3ex of askCustomer,
        ]
        (getReceipt)
        {\secretlabel{getReceipt}};
        \node[minimum size=4.99ex,MessageEndEvent,
        below  = 3.3ex of getReceipt,
        label=left:{ok}
        ]
        (isOK)
        {\secretlabel{isOK}};

        \node[fill=white,minimum size=4.99ex,MessageIntermediateCatchEvent,
        right = 2*3.3ex of askCustomer]
        (damageReported)
        {\secretlabel{damageReported}};


        \node[fill=white,draw,rounded corners,%
        right=3.3ex of damageReported,%
        rotate=-90,%
        anchor=south,
        allow upside down]
        (pickBack)
        {pick up\secretlabel{pickBack}};
        \node[fill=white,rounded corners,draw,
        below = 3.3ex of pickBack.east,%
        rotate=-90,%
        anchor=west,%
        allow upside down]
        (return)
        {\minitabular{return}\secretlabel{return}};
        \node[minimum size=4.99ex,MessageEndEvent,
        right = 2*3.3ex of return.north
        ]
        (reportDamage)
        {\secretlabel{reportDamage}};


        \node[fill=white,ExclusiveEventBasedGateway,
        right = 3.3ex of invoicedAndShipped
        ]
        (awaitCustomerResponse)
        {\secretlabel{awaitCustomerResponse}};


        \node[minimum size=4.99ex,MessageIntermediateCatchEvent,fill=white,%
        below = \suppdist-5*3.3ex of reportDamage]
        (trash)
        {\secretlabel{trash}};
        \node[rounded corners,fill=white,draw,%
        left = 3.5*3.3ex of trash%
        ]
        (takeBack){%
          receive\secretlabel{takeBack}
        };
        \node[rounded corners,fill=white,%
        below = 3.3ex of takeBack,%
        draw
        ]
        (writeOff)
        {write off\secretlabel{writeOff}};
        \node[minimum size=4.99ex,EndEvent,
        right = 3.3ex of writeOff,
        label={[name=ldamage]right:{damage}}
        ]
        (damageWrittenOff)
        {\secretlabel{damageWrittenOff}};

        \path (awaitCustomerResponse.south) -- ++([xshift=2ex,yshift=-2*3.3ex]0,0)
        node[fill=white,minimum size=4.99ex,MessageIntermediateCatchEvent,
        anchor=west
        ]
        (receipt)
        {\secretlabel{receipt}};
        \coordinate[
        right=2*3.3ex of receipt
        ]
        (waitMoney)
        {}; 
        \node[
        minimum size=4.99ex,SignalIntermediateCatchEvent,fill=white,
        label = below:{\ul{payment}}
        ]
        (waitMoney)
        at (waitMoney)
        {};
        \node[minimum size=4.99ex,EndEvent,
        right = 2*3.3ex of waitMoney,
        label={[name=lsold]right:{sold}}
        ]
        (sold)
        {\secretlabel{sold}};

        \foreach \u/\v in {%
          check/decide,%
          startCustomer/orderReq,%
          waitAvlb/reqRej,%
          orderReq/waitAvlb,%
          waitAvlb/reqAcc,%
          reqAcc/waitAndSee,%
          waitAndSee/recInvoice,%
          reqRej/reqRejEnd,%
          waitedAndSeen/check,%
          happy/happyEnd,%
          refuse/sendBack,%
          sendBack/unhappy,%
          preHappy/happy,%
          receipt/waitMoney,%
          startSupplier/startSupplier-|checkStock.south,%
          checkStock.north|-acceptableReq.west/acceptableReq.west,%
          acceptableReq/nostock,%
          acceptableReq/instock,%
          instock/invoiceAndShip,%
          invoiceAndShip/sendInvoice,%
          invoiceAndShip/reqShipper,%
          takeBack/trash,%
          writeOff/damageWrittenOff,%
          waitMoney/sold,%
          invoicedAndShipped/awaitCustomerResponse,%
          deliver/askCustomer,%
          getReceipt/isOK,%
          askCustomer/damageReported,%
          damageReported/pickBack,%
          pickBack/return,%
          return/reportDamage,%
          pickUp/deliver%
        }
        \draw[->] (\u) -- (\v);

        \foreach \u/\v in {%
          decide/preHappy,%
          decide/refuse,%
          waitAndSee/receive,%
          awaitCustomerResponse/takeBack,%
          awaitCustomerResponse/receipt,%
          askCustomer/getReceipt,%
          startShipper/pickUp,%
          reqShipper/doShip%
        }
        \draw[->] (\u) |- (\v);

        \foreach \u/\v in {%
          doShip/invoicedAndShipped,%
          receive/waitedAndSeen,%
          recInvoice/waitedAndSeen,%
          sendInvoice/invoicedAndShipped%
        }
        \draw[->] (\u) -| (\v);

        \foreach \u/\v/\c/\p/\d in
        {
          nostock/reqRej/orderRej/.5/--,%
          instock/reqAcc/orderAcc/.4/--,%
          reportDamage/trash/trashCoord/.65/--,%
          sendBack.south/sendBack.south|-pickBack.west/nothing/.5/--%
        }
        \draw[dashed,shorten <=-2pt,{Circle[fill=white]}-{Triangle[fill=white]},rounded corners]
        (\u) \d (\v) coordinate[pos=\p] (\c);


        \draw[->](trash.south) to[ncbar,angle=-90,arm=1ex] (writeOff.north);
        \draw[dashed,shorten <=-2pt,{Circle[fill=white]}-{Triangle[fill=white]}]
        (refuse) to[ncbar,angle=-90,arm=.32*\custdist,rounded corners] (damageReported) ;
        \coordinate (refusalCoord) at (ncbarmid);
        \draw[dashed,shorten <=-2pt,{Circle[fill=white]}-{Triangle[fill=white]}]
        (return) -| 
        (takeBack.120) ;
        \draw[dashed,shorten <=-2pt,{Circle[fill=white]}-{Triangle[fill=white]}]
        (reqShipper) to [ncbar,arm=8.00ex,angle=0,rounded corners] (startShipper) ;%
        \path (firstncbarcorner) -- (secondncbarcorner) coordinate[pos=.75] (shipThis);
        \draw[dashed,shorten <=-2pt,{Circle[fill=white]}-{Triangle[fill=white]}]
        (orderReq) to[ncbar,angle=-90,arm=\custdist,rounded corners] (startSupplier);
        \coordinate (orderCoord) at (secondncbarcorner);
        \draw[dashed,shorten <=-2pt,{Circle[fill=white]}-{Triangle[fill=white]}]
        (deliver.east) to[ncbar,angle=90,arm=3.3ex,rounded corners] (receive);
        \draw[dashed,shorten <=-2pt,{Circle[fill=white]}-{Triangle[fill=white]}]
        (doShip.60) to [ncbar,angle=90,arm=8*3.3ex,rounded corners=2pt] (pickUp.west);
        \draw[dashed,shorten <=-2pt,{Circle[fill=white]}-{Triangle[fill=white]}]
        (sendInvoice.west) to[ncbar,angle=180,arm=3.00ex,rounded corners] (recInvoice.west);
        \coordinate (invoiceCoord) at (ncbarmid);
        \draw[dashed,shorten <=-2pt,{Circle[fill=white]}-{Triangle[fill=white]}]
        (isOK.south) to[ncbar,angle=-90,arm=2*3.3ex,rounded corners] (receipt);
        \coordinate (receiptCoord) at (ncbarmid);
        \draw[dashed,shorten <=-2pt,{Circle[fill=white]}-{Triangle[fill=white]}]
        
        (preHappy) to[ncbar,angle=-90,arm=7*3.3ex,rounded corners] (getReceipt);
        \coordinate (ShipReceiptCoord) at (secondncbarcorner);
        \msg[order]{orderCoord};
        \node[anchor=south] at (order.north){\minitabular{order\\ goods}};
        \msg[reject]{orderRej};
        \node[anchor=south,rotate=90] at(reject.west){rejection};
        \msg[accept]{orderAcc};
        \node[rotate=90,anchor=south,outer sep=0pt,inner sep=0pt] at (accept.west){acceptance};
        \msg[invoice]{invoiceCoord};
        \node[outer sep=0pt,inner sep=0pt,rotate=90,anchor=north
        ] (invoiceLabel) at ([xshift=-.5ex]invoice.east){invoice};
        \msg[daReceipt]{receiptCoord};
        \node[left=0ex of daReceipt] {receipt};
        \msg[custAck]{ShipReceiptCoord};
        \node[left=0ex of custAck] {receipt};
        \msg[shippingReq]{shipThis};
        \node[anchor=east] at (shippingReq.west){\minitabular{postage\\fee}};
        \msg[refusal]{refusalCoord};
        \node[anchor=west] at (refusal.east){refusal};
        \msg[damagereportletter]{trashCoord};
        \node[left= 0ex of damagereportletter] {damage report};

        
        \begin{pgfonlayer}{background}
          \node [fill=lightgray,rectangle,draw,fit={([yshift=-1ex]recInvoice.south)
            (unlucky)(happyl)(startCustomer)(refuse)(sendOrderL)(receive)(happyEnd)([yshift=-3pt]sendBack.south)}] (pool1) {};
          \node[rotate=90,anchor=south,outer sep=0pt] (label1) at (pool1.west){Customer};
          \node [fit={(pool1.south west) (pool1.north west) (label1)},inner sep=0pt,rectangle,draw,thick] {};
        \end{pgfonlayer}
        \begin{pgfonlayer}{background}
          \node [fill=lightgray,rectangle,draw,fit={([xshift=-5.00ex]startShipper.south west)([yshift=-3pt]startShipper.south)(deliver)(return)(reportDamage)(isOK)(pickBack)([xshift=2pt]pickBack.north)}] (pool2) {};
          \node[rotate=90,anchor=south,outer sep=0pt] (label2) at (pool2.west){Shipper};
          \node [fit={(pool2.south west) (pool2.north west) (label2)},inner sep=0pt,rectangle,draw,thick] {};
        \end{pgfonlayer}
        \begin{pgfonlayer}{background}
          \node [fill=lightgray,rectangle,draw,fit={(startSupplier)(receiveOrderL)(sendInvoice)(sold)(lsold)(trash)(doShip)(ldamage)([yshift=3pt]takeBack.north) }] (pool3) {};
          \node[rotate=90,anchor=south,outer sep=0pt] (label3) at (pool3.west){Supplier};
          \node [fit={(pool3.south west) (pool3.north west) (label3)},inner sep=0pt,rectangle,draw,thick] {};
        \end{pgfonlayer}

        \node[anchor=south west,rotate=90,,outer sep=0pt,inner sep=0pt] (checkOk) at (decide.north) {\tiny OK!};
        \node[anchor=north west,rotate=-90,outer sep=0pt,inner sep=0pt] (checkNotOk) at (decide.south) {\tiny damaged!};
        
        
      \end{tikzpicture}
    }
\end{iflong}
  \caption{A \ifLong{an}\ifShort{a simplified} order-to-cash process}
  \label{fig:o2c-basic}
\end{figure}
\subsection{Stochastic games, strategies, equilibria}
\label{sec:game-theory}
We proceed by reviewing core concepts and central results for %
stochastic games~\cite{Shapley1095StochGames}, %
introducing notation alongside; %
we shall use examples to illustrate the most important concepts. %
The presentation is intended to be self-contained %
such that no additional references should be necessary. %
However, %
the interested reader might want to consult standard references or
additional material, %
e.g., textbooks~\cite{leyton2008essentials,RubinsteinOsborne1994book}, %
handbook articles~\cite{Jaskiewicz2017handbook}, %
and surveys~\cite{SolanVieille2015PNAS}. %
We start with the central notion. %
\begin{definition}[Stochastic game]
  \label{def:stoch-game}
  A \emph{stochastic game}~\(\G\) is a quintuple
  \(
    \G = \game
  \)
  that consists of
  \begin{itemize}
  \item %
    a finite set of \emph{players} \(\N = \{1,\dotsc,|\N|\}\) %
    (ranged over by \(\plr,\plr['],\plr[n]\), etc.); 
  \item %
    a finite set of \emph{states}~\(\states\) %
    (ranged over by \(\state,\state['],\state[n]\), etc.); %
  \item a finite, non-empty set of \emph{action profiles} \(\acts = \prod_{\plr=1}^{|\N|}  \acts[\plr]\) %
    (ranged over by \(\act,\act_n\), etc.),
    which is the Cartesian product of %
    a player-indexed family~\(\fmluptext{\acts}{\plr}{\N}\) %
    of sets \(\acts[\plr]\), %
    each of which contains the \emph{actions} of the respective player %
    (ranged over by \(\act[\plr],\act[\plr]_n\), etc.); %
  \item %
    a non-empty set of \emph{available actions} \(\acts[\plr](\state)\subseteq \acts[\plr]\), %
    for each state~\(\state \in \states\) and player~\(\plr\); 
  \item %
    probability distributions %
    \(\prob{{\cdot}}{\state}{\act} \in \Delta(\states)\), %
    for each state~\(\state\in\states\) and every action profile~\(\act \in \acts\),
    which map each state~\(\state[']\in\states\) to \(\prob{\state[']}{\state}{\act}\),
    the \emph{transition probability}  
    from state~\(\state\)  to state~\(\state'\) under the action profile~\(\act\);
    and
  \item %
    the payoff vectors \(\pay(\state,\act) = \langle \pay[1](\state,\act), \dotsc, \pay[|\N|](\state,\act)\rangle\), %
    for each state~\(\state\in\states\) and every action profile~\(\act = \xect{xact}{|\N|} \in \acts\).   \end{itemize}
\end{definition}
Note that players always have some action(s) available, 
possibly just a dedicated idle action, 
see e.g.~\cite{kwiatkowska2018automated}. %


\begin{ifshort}
  The \bpmn model of \figref{fig:o2c-basic} can be understood as %
  (a partial specification of) a stochastic game  %
  played by a shipper, a customer, and a supplier. %
  Abstracting from data, precise timings,
  and similar semantic aspects, %
  a state of the game is a state of an instance of the process, %
  which is represented as a token marking of the \bpmn model. 
  The actions of each player are the activities and events in the respective pool, %
  e.g., the \emph{ship} task, %
  which \emph{Supplier} performs after receiving an order from the \emph{Customer} %
  and payment of the postage fee to \emph{Shipper}. 
  \thnote{%
    For sending of messages, %
    we ``cheat'' and push responsibility to the sender. \\%
    IW: how is this ``cheating''? Isn't it an aspect of the translation? If so, it may or may not be worth mentioning. \\
    \ul{TH} well, try to phone someone that does not pick up the phone ! \\
    IW: OK. Well, I'm still happy with what we say otherwise. (Could also be seen as an instance of trying to send a message, but failing to; like when you send a message via TCP and cannot, because the receiving server is offline. (disregarding UDP)
  }
  Action profiles are combinations of actions %
  that can (or must) be executed concurrently. %
  For example, %
  sending the order and receiving the order after the start of the collaboration %
  may be performed synchronously (e.g., via telephone). %
  The available actions of a player in a given state %
  are the tasks or events in the respective pool %
  that can be executed or happen next -- plus the possibility to idle. %
  The transition probabilities for available actions in this \bpmn process are all~\(1\), %
  such that if players choose to execute certain tasks next, %
  they will be able to do so as long as %
  the chosen activities are actually available actions. %
  As a consequence, %
  all other transition probabilities are~\(0\). %

  One important piece of information that is \emph{not} explicitly specified %
  in the \bpmn model is the utility (or payoff) of tasks and events. %
  In general, it is non-trivial to chose utility functions. %
  However, %
  the example is chosen such that there are natural candidates; %
  e.g., postage can be looked up from one's favorite carrier. %

  A single instance of the order-to-cash process exhibits the well-known %
  phenomenon %
  that \emph{Customer} has no incentive to pay. %
  However, %
  we want to stress that -- very much for the same reason! --  
  \emph{Shipper} would not have any good reason to perform delivery, %
  once the postage fee is paid. %
  Thus, %
  besides the single instance scenario, 
  we shall consider an %
  unbounded number of repetitions of the process, %
  but only one active process instance at each point in time.%
  \footnote{%
    We leave the very interesting situation of %
    interleaved execution of several process instances for future work.%
  } %
  Now, %
  the rational reason for the shipper to deliver (and return damaged goods) %
  is expected revenue from future process instances. %
     \begin{figure}[t]
    \centering
    \begin{tikzpicture}[minimum size=5ex,baseline={(s_c.east)}]
      \scriptsize
      \node[fill=white,fill=white,StartEvent,draw] (s_1) at (0,0) {};
      \node[fill=white,ExclusiveEventBasedGateway,
      right = 3ex of s_1] (decide_1) {};
      \node[fill=white,MessageIntermediateCatchEvent,above right = 3ex of decide_1] (snd_1)  {};
      \node[fill=white,MessageIntermediateThrowEvent,right = 3ex of snd_1] (dosnd_1)  {};
      \node[fill=white,MessageIntermediateCatchEvent,below right = 3ex of decide_1] (rcv_1) {};
      \node[fill=white,MessageIntermediateCatchEvent,right = 3ex of rcv_1] (dorcv_1)  {};
      \node[fill=white,draw,rounded corners,right = 2*3ex of dorcv_1,anchor=200] (nope_1) {
        \begin{tabular}[c]{@{}c@{}}
          \smiley\smiley\smiley\\
          surfing
        \end{tabular}
      };

      \node[fill=white,draw,rounded corners,right = 2*3ex of dosnd_1] (suc_1) {
        \begin{tabular}[c]{@{}c@{}}
          work\\ %
          \$\$
        \end{tabular}
      };
      \node[fill=white,ExclusiveGateway,right = 31ex of decide_1] (done_1) {};

      \node[EndEvent,right = 3ex of done_1] (end_1){};
      \foreach \u/\v/\l/\s/\k in {%
        s_1/decide_1///--,%
        decide_1/snd_1///|-,
        decide_1/rcv_1///|-,
        rcv_1/dorcv_1///--,%
        snd_1/dosnd_1///--,%
        dorcv_1/dorcv_1-|nope_1.west///--,%
        dosnd_1/suc_1///--,%
        done_1/end_1///--%
      }
      \draw[-Triangle] (\u) \k node[auto,\s]{\l} (\v);
      \foreach \u/\v/\l/\s in {%
        nope_1.-20/done_1//,%
        suc_1/done_1//%
      }
      \draw[-Triangle] (\u) -| node[auto,\s]{\l} (\v);
      \begin{pgfonlayer}{background}
        \node [fill=lightgray,rectangle,draw,fit={(s_1) (snd_1) (rcv_1) (end_1) (suc_1)}] (pool1) {};
        \node[rotate=90,anchor=south,outer sep=0pt] (label1) at (pool1.west){Bob};
        \node [fit={(pool1.south west) (pool1.north west) (label1)},inner sep=0pt,rectangle,draw,thick] {};
      \end{pgfonlayer}

      \begin{scope}[shift={([yshift=-42ex]0,0)}]
        \node[fill=white,fill=white,StartEvent,draw] (s_2) at (0,0) {};
        \node[fill=white,ExclusiveEventBasedGateway,
        right = 3ex of s_2] (decide_2) {};
        \node[fill=white,MessageIntermediateCatchEvent,below right = 3ex of decide_2] (snd_2)  {};
        \node[fill=white,MessageIntermediateThrowEvent,right = 3ex of snd_2] (dosnd_2)  {};
        \node[fill=white,MessageIntermediateCatchEvent,above right = 3ex of decide_2] (rcv_2) {};
        \node[fill=white,MessageIntermediateCatchEvent,right = 3ex of rcv_2] (dorcv_2)  {};
        \node[fill=white,draw,rounded corners,right = 2*3ex of dorcv_2,anchor=160] (nope_2) {
          \begin{tabular}[c]{@{}c@{}}
            fishing \\
            \smiley\smiley
          \end{tabular}
        };
        \node[fill=white,draw,rounded corners,right = 2*3ex of dosnd_2] (suc_2) {
          \begin{tabular}[c]{@{}c@{}}
            work\\ \$\$\$\smiley
          \end{tabular}
        };
        \node[fill=white,ExclusiveGateway,right = 31ex of decide_2] (done_2) {};

        \node[EndEvent,right = 3ex of done_2] (end_2){};
        \foreach \u/\v/\l/\s/\k in {%
          s_2/decide_2///--,%
          decide_2/snd_2///|-,
          decide_2/rcv_2///|-,
          rcv_2/dorcv_2///--,%
          snd_2/dosnd_2///--,%
          dorcv_2/dorcv_2-|nope_2.west///--,%
          dosnd_2/suc_2///--,%
          done_2/end_2///--%
        }
        \draw[-Triangle] (\u) \k node[auto,\s]{\l} (\v);
        \foreach \u/\v/\l/\s in {%
          nope_2.20/done_2//,%
          suc_2/done_2//%
        }
        \draw[-Triangle] (\u) -| node[auto,\s]{\l} (\v);
        \begin{pgfonlayer}{background}
          \node [fill=lightgray,rectangle,draw,fit={(s_2) (snd_2) (rcv_2) (end_2) (suc_2)}] (pool2) {};
          \node[rotate=90,anchor=south,outer sep=0pt] (label2) at (pool2.west){Alice};
          \node [fit={(pool2.south west) (pool2.north west) (label2) },inner sep=0pt,rectangle,draw,thick] {};
        \end{pgfonlayer}

      \end{scope}

      \scriptsize
      \path (s_1) -- (s_2) coordinate[midway] (midtmp);
      \node[fill=white,StartEvent,draw] (s_c) at ([xshift=-14ex]midtmp) {};
      \node[fill=white,draw,rounded corners,right = 3ex of s_c] (flip_c) {\color{white}{flip}};
      \node[overlay] at (flip_c.center) {\includegraphics[height=5ex]{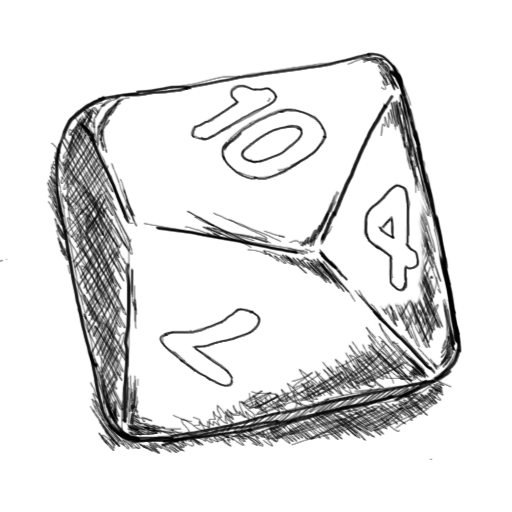}};
      \node[overlay] at ([xshift=-.05pt]flip_c.center) {\includegraphics[height=5ex]{d10.png}};
      \node[overlay] at ([xshift=.05pt]flip_c.center) {\includegraphics[height=5ex]{d10.png}};
      \node[overlay] at ([yshift=-.05pt]flip_c.center) {\includegraphics[height=5ex]{d10.png}};
      \node[overlay] at ([yshift=.05pt]flip_c.center) {\includegraphics[height=5ex]{d10.png}};
      \node[fill=white,ExclusiveGateway,
      right = 3ex of flip_c] (decide_c) {};
      \node[fill=white,MessageIntermediateThrowEvent,above right = 3ex of decide_c] (snd1_c)  {};
      \node[fill=white,MessageIntermediateThrowEvent,right = 2*3ex of snd1_c] (rcv1_c)  {};
      
      \node[fill=white,MessageIntermediateThrowEvent,below right = 3ex of decide_c] (snd2_c)  {};
      \node[fill=white,MessageIntermediateThrowEvent,right = 2*3ex of snd2_c] (rcv2_c)  {};

      \node[fill=white,ExclusiveGateway,below right = 3ex of rcv1_c] (done_c) {};
      \node[EndEvent,right = 2*3ex of done_c] (end_c){};
      \foreach \u/\v/\l/\s/\k in {%
        s_c/flip_c///--,%
        flip_c/decide_c///--,%
        decide_c/snd1_c///|-,%
        decide_c/snd2_c//swap/|-,%
        snd1_c/rcv1_c///--,%
        snd2_c/rcv2_c///--,%
        rcv1_c/done_c///-|,%
        rcv2_c/done_c///-|,%
        done_c/end_c///--%
      }
      \draw[-Triangle] (\u) \k node[auto,\s]{\l} (\v);
      \node[above left = 0ex of decide_c.north]{\({}\leq{6}\)};
      \node[below left = 0ex of decide_c.south] {\({}>{6}\)};

      \begin{pgfonlayer}{background}
        \node [fill=lightgray,rectangle,draw,fit={(s_c) (snd1_c) (rcv2_c) (end_c)}] (poolc) {};
        \node[rotate=90,anchor=south,outer sep=0pt] (labelc) at (poolc.west){\med};
        \node [fit={(poolc.south west) (poolc.north west) (labelc)},inner sep=0pt,rectangle,draw,thick] {};
      \end{pgfonlayer}

      \draw[bend left=30,white,shorten <=-2pt,{Circle[fill=white]}-{Triangle[fill=white]}] (dosnd_1) to (dorcv_2);
      \draw[bend right=35,white,shorten <=-2pt,{Circle[fill=white]}-{Triangle[fill=white]}] (dosnd_2) to (dorcv_1);
      \draw[bend left=30,dashed,shorten <=-2pt,{Circle[fill=white]}-{Triangle[fill=white]}] (dosnd_1) to (dorcv_2);
      \draw[bend right=35,dashed,shorten <=-2pt,{Circle[fill=white]}-{Triangle[fill=white]}] (dosnd_2) to (dorcv_1);

      \foreach \u/\v/\b in {%
        snd1_c.north/snd_1.south/left,%
        rcv1_c.south/rcv_2.north/right,%
        snd2_c.south/snd_2.north/right,%
        rcv2_c.north/rcv_1.south/left%
      }
      \draw[bend \b=20,dashed,shorten <=-2pt,{Circle[fill=white]}-{Triangle[fill=white]}] (\u) to coordinate[pos=.7](msg)  (\v);
    \end{tikzpicture}
    \caption{The \emph{To work or not to work?} collaboration}
    \label{to-work-not-coll-med}
  \end{figure}

  One distinguishing feature of the order-to-cash collaboration is %
  that participants do not need to coordinate with each other in any non-trivial way; %
  in particular, %
  there are no joint decisions to make. %
  To illustrate the point, %
  let us consider a different example. %
  Alice and Bob are co-founders of a company, %
  which is running so smoothly %
  that it suffices when, any day of the week, only one of them is going to work. %


  Alice suggests that their secretary \med could help them out %
  by rolling a 10-sided die 
  each morning and %
  notifying them about who is going to go to work that day,
  dependent on whether the outcome is smaller or larger than six. %
  Alice's idea reasoning behind this elaborate process %
  (as shown in \figref{to-work-not-coll-med}), %
  which lets Bob and Alice work 60\% and 40\% of the days,  %
  resp.,
  is the well-known fact that
  Alice is 50\% more efficient than Bob %
  when it comes to generating revenue. 

\end{ifshort}
\begin{iflong}
  The \bpmn model of \figref{fig:o2c-basic} can be understood as %
  (a partial specification of) a stochastic game  %
  played by a shipper, a customer, and a supplier. %
  Abstracting from data, precise timings,
  and similar semantic aspects, %
  a state of the game is a state of an instance of the process, %
  which is represented as a token marking of the \bpmn model. 
  The actions of each player are the activities and events in the respective pool, %
  e.g., the \emph{check stock} task, %
  which \emph{Supplier} performs upon receiving an order from \emph{Customer}. %
  \thnote{%
    For sending of messages, %
    we ``cheat'' and push responsibility to the sender. \\%
    IW: how is this ``cheating''? Isn't it an aspect of the translation? If so, it may or may not be worth mentioning. \\
    \ul{TH} well, try to phone someone that does not pick up the phone ! \\
    IW: OK. Well, I'm still happy with what we say otherwise. (Could also be seen as an instance of trying to send a message, but failing to; like when you send a message via TCP and cannot, because the receiving server is offline. (disregarding UDP)
  }
  Action profiles are combinations of actions %
  that can (or must) be executed concurrently. %
  For example, %
  sending the order and receiving the order after the start of the collaboration %
  may be performed synchronously (e.g., via telephone). %
  The available actions of a player in a given state %
  are the tasks or events in the respective pool %
  that can be executed or happen next -- plus the possibility to idle. %
  The transition probabilities for available actions in this \bpmn process are all~\(1\), %
  such that if players choose to execute certain tasks next, %
  they will be able to do so as long as %
  the chosen activities are actually available actions. %
  As a consequence, %
  all other transition probabilities are~\(0\). %

  One important piece of information that is \emph{not} explicitly specified %
  in the \bpmn model is the utility (or payoff) of tasks and events. %
  In general, it is non-trivial to chose utility functions. %
  However, %
  the example is chosen such that there are natural candidates; %
  e.g., postage can be looked up from one's favorite carrier. %

  A single instance of the order-to-cash process exhibits the well-known %
  phenomenon %
  that \emph{Customer} has no incentive to pay. %
  However, %
  we want to stress that -- very much for the same reason! --  
  \emph{Shipper} would not have any good reason to perform delivery, %
  once the postage fee is paid. %
  Thus, %
  besides the single instance scenario, 
  we shall consider an %
  unbounded number of repetitions of the process, %
  but only one active process instance at each point in time.%
  \footnote{%
    We leave the very interesting situation of %
    interleaved execution of several process instances for future work.%
  } %
  Now, %
  the rational reason for the shipper to deliver (and return damaged goods) %
  is expected revenue from future process instances. %
  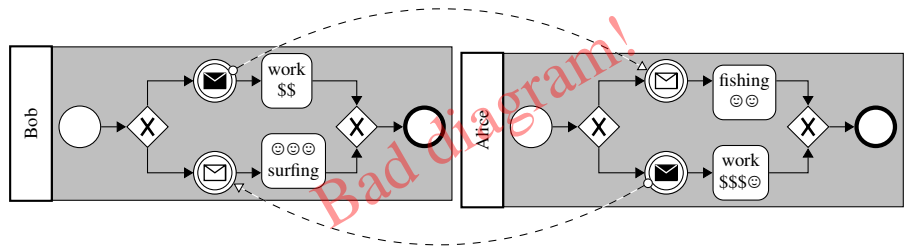
\begin{figure}[t]
    \centering
    \begin{tikzpicture}[minimum size=5ex]
      \scriptsize
      \node[fill=white,fill=white,StartEvent,draw] (s_1) at (0,0) {};
      \node[fill=white,ExclusiveGateway,
      right = 3ex of s_1] (decide_1) {};
      \coordinate[
      above right = 2*3ex of decide_1] (snd_1) ;
      \node[fill=white,MessageIntermediateThrowEvent,right = 0ex of snd_1] (dosnd_1)  {};
      \coordinate[
      below right = 2*3ex of decide_1] (rcv_1);
      \node[fill=white,MessageIntermediateCatchEvent,right = 0ex of rcv_1] (dorcv_1)  {};
      \node[fill=white,draw,rounded corners,right = 3ex of dorcv_1,anchor=200] (nope_1) {
        \begin{tabular}[c]{@{}c@{}}
          \smiley\smiley\smiley\\
          surfing
        \end{tabular}
      };

      \node[fill=white,draw,rounded corners,right = 3ex of dosnd_1] (suc_1) {
        \begin{tabular}[c]{@{}c@{}}
          work\\ %
          \$\$
        \end{tabular}
      };
      \node[fill=white,ExclusiveGateway,right = 20ex of decide_1] (done_1) {};

      \node[EndEvent,right = 3ex of done_1] (end_1){};
      \foreach \u/\v/\l/\s/\k in {%
        s_1/decide_1///--,%
        decide_1/dosnd_1///|-,
        decide_1/dorcv_1///|-,
        dorcv_1/dorcv_1-|nope_1.west///--,%
        dosnd_1/suc_1///--,%
        done_1/end_1///--%
      }
      \draw[-Triangle] (\u) \k node[auto,\s]{\l} (\v);
      \foreach \u/\v/\l/\s in {%
        nope_1.-20/done_1//,%
        suc_1/done_1//%
      }
      \draw[-Triangle] (\u) -| node[auto,\s]{\l} (\v);
      \begin{pgfonlayer}{background}
        \node [fill=lightgray,rectangle,draw,fit={(s_1) (dosnd_1) (dorcv_1) (end_1) (suc_1)}] (pool1) {};
        \node[rotate=90,anchor=south,outer sep=0pt] (label1) at (pool1.west){Bob};
        \node [fit={(pool1.south west) (pool1.north west) (label1)},inner sep=0pt,rectangle,draw,thick] {};
      \end{pgfonlayer}

      \begin{scope}
        [shift={(6,0)}]
        \node[fill=white,fill=white,StartEvent,draw] (s_2) at (0,0) {};
        \node[fill=white,ExclusiveGateway,
        right = 3ex of s_2] (decide_2) {};
        \coordinate[
        below right = 2*3ex of decide_2] (snd_2);
        \node[fill=white,MessageIntermediateThrowEvent,right = 0ex of snd_2] (dosnd_2)  {};
        \coordinate[
        above right = 2*3ex of decide_2] (rcv_2);
        \node[fill=white,MessageIntermediateCatchEvent,right = 0ex of rcv_2] (dorcv_2)  {};
        \node[fill=white,draw,rounded corners,right = 3ex of dorcv_2,anchor=160] (nope_2) {
          \begin{tabular}[c]{@{}c@{}}
            fishing \\
            \smiley\smiley
          \end{tabular}
        };

        \node[fill=white,draw,rounded corners,right = 3ex of dosnd_2] (suc_2) {
          \begin{tabular}[c]{@{}c@{}}
            work\\ \$\$\$\smiley
          \end{tabular}
        };
        \node[fill=white,ExclusiveGateway,right = 20ex of decide_2] (done_2) {};

        \node[EndEvent,right = 3ex of done_2] (end_2){};
        \foreach \u/\v/\l/\s/\k in {%
          s_2/decide_2///--,%
          decide_2/dosnd_2///|-,
          decide_2/dorcv_2///|-,
          dorcv_2/dorcv_2-|nope_2.west///--,%
          dosnd_2/suc_2///--,%
          done_2/end_2///--%
        }
        \draw[-Triangle] (\u) \k node[auto,\s]{\l} (\v);
        \foreach \u/\v/\l/\s in {%
          nope_2.20/done_2//,%
          suc_2/done_2//%
        }
        \draw[-Triangle] (\u) -| node[auto,\s]{\l} (\v);
        \begin{pgfonlayer}{background}
          \node [fill=lightgray,rectangle,draw,fit={(s_2) (dosnd_2) (dorcv_2) (end_2) (suc_2)}] (pool2) {};
          \node[rotate=90,anchor=south,outer sep=0pt] (label2) at (pool2.west){Alice};
          \node [fit={(pool2.south west) (pool2.north west) (label2) },inner sep=0pt,rectangle,draw,thick] {};
        \end{pgfonlayer}

      \end{scope}

      \begin{scope}[overlay]
        \draw[bend left=30,white,shorten <=-2pt,{Circle[fill=white]}-{Triangle[fill=white]}] (dosnd_1) to (dorcv_2);
        \draw[bend left=30,white,shorten <=-2pt,{Circle[fill=white]}-{Triangle[fill=white]}] (dosnd_2) to (dorcv_1);
        \draw[bend left=30,dashed,shorten <=-2pt,{Circle[fill=white]}-{Triangle[fill=white]}] (dosnd_1) to (dorcv_2);
        \draw[bend left=30,dashed,shorten <=-2pt,{Circle[fill=white]}-{Triangle[fill=white]}] (dosnd_2) to (dorcv_1);
      \end{scope}

      \foreach \u/\v/\b in {%
      }
      \draw[bend \b=20,dashed,shorten <=-2pt,{Circle[fill=white]}-{Triangle[fill=white]}] (\u) to coordinate[pos=.7](msg)  (\v);
      \path (end_1) -- (s_2) node[pos=.5,rotate=30,opacity=.4,red]{
        \Huge Bad diagram!
      };
    \end{tikzpicture}
    \caption{Bob's ideas about work life balance with Alice's disapproval}
    \label{fig:work-life-balance}
  \end{figure}

  One distinguishing feature of the order-to-cash collaboration is %
  that participants do not need to coordinate with each other in any non-trivial way; %
  in particular, %
  there are no joint decisions to make. %
  To illustrate the point, %
  let us consider a different example. %
  Alice and Bob are co-founders of a company, %
  which is running so smoothly %
  that it suffices when, any day of the week, only one of them is going to work. %
  Depending on their mood, %
  at least one of them can spend their time with their favorite hobby. %

  \begin{figure}[tbh]
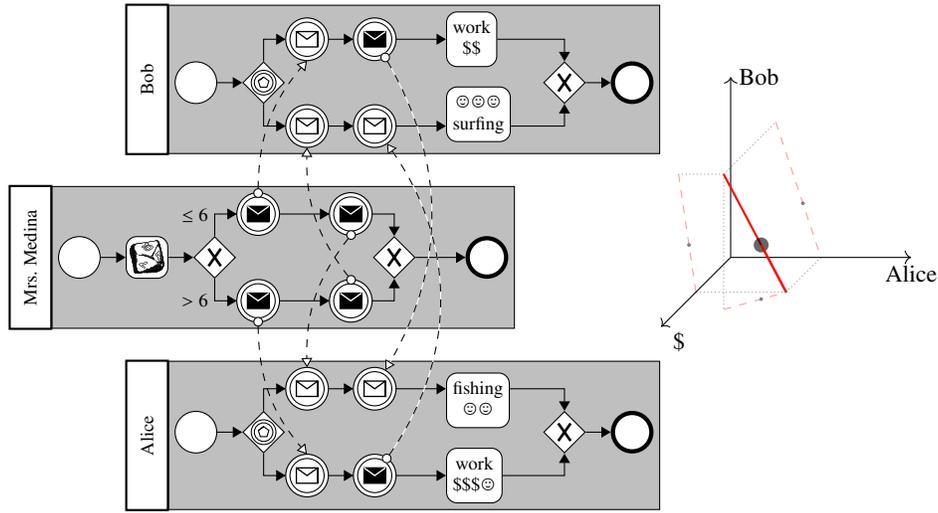

    \centering
    \begin{tikzpicture}[minimum size=5ex,baseline={(s_c.east)}]
      \scriptsize
      \node[fill=white,fill=white,StartEvent,draw] (s_1) at (0,0) {};
      \node[fill=white,ExclusiveEventBasedGateway,
      right = 3ex of s_1] (decide_1) {};
      \node[fill=white,MessageIntermediateCatchEvent,above right = 3ex of decide_1] (snd_1)  {};
      \node[fill=white,MessageIntermediateThrowEvent,right = 3ex of snd_1] (dosnd_1)  {};
      \node[fill=white,MessageIntermediateCatchEvent,below right = 3ex of decide_1] (rcv_1) {};
      \node[fill=white,MessageIntermediateCatchEvent,right = 3ex of rcv_1] (dorcv_1)  {};
      \node[fill=white,draw,rounded corners,right = 2*3ex of dorcv_1,anchor=200] (nope_1) {
        \begin{tabular}[c]{@{}c@{}}
          \smiley\smiley\smiley\\
          surfing
        \end{tabular}
      };

      \node[fill=white,draw,rounded corners,right = 2*3ex of dosnd_1] (suc_1) {
        \begin{tabular}[c]{@{}c@{}}
          work\\ %
          \$\$
        \end{tabular}
      };
      \node[fill=white,ExclusiveGateway,right = 31ex of decide_1] (done_1) {};

      \node[EndEvent,right = 3ex of done_1] (end_1){};
      \foreach \u/\v/\l/\s/\k in {%
        s_1/decide_1///--,%
        decide_1/snd_1///|-,
        decide_1/rcv_1///|-,
        rcv_1/dorcv_1///--,%
        snd_1/dosnd_1///--,%
        dorcv_1/dorcv_1-|nope_1.west///--,%
        dosnd_1/suc_1///--,%
        done_1/end_1///--%
      }
      \draw[-Triangle] (\u) \k node[auto,\s]{\l} (\v);
      \foreach \u/\v/\l/\s in {%
        nope_1.-20/done_1//,%
        suc_1/done_1//%
      }
      \draw[-Triangle] (\u) -| node[auto,\s]{\l} (\v);
      \begin{pgfonlayer}{background}
        \node [fill=lightgray,rectangle,draw,fit={(s_1) (snd_1) (rcv_1) (end_1) (suc_1)}] (pool1) {};
        \node[rotate=90,anchor=south,outer sep=0pt] (label1) at (pool1.west){Bob};
        \node [fit={(pool1.south west) (pool1.north west) (label1)},inner sep=0pt,rectangle,draw,thick] {};
      \end{pgfonlayer}

      \begin{scope}[shift={([yshift=-42ex]0,0)}]
        \node[fill=white,fill=white,StartEvent,draw] (s_2) at (0,0) {};
        \node[fill=white,ExclusiveEventBasedGateway,
        right = 3ex of s_2] (decide_2) {};
        \node[fill=white,MessageIntermediateCatchEvent,below right = 3ex of decide_2] (snd_2)  {};
        \node[fill=white,MessageIntermediateThrowEvent,right = 3ex of snd_2] (dosnd_2)  {};
        \node[fill=white,MessageIntermediateCatchEvent,above right = 3ex of decide_2] (rcv_2) {};
        \node[fill=white,MessageIntermediateCatchEvent,right = 3ex of rcv_2] (dorcv_2)  {};
        \node[fill=white,draw,rounded corners,right = 2*3ex of dorcv_2,anchor=160] (nope_2) {
          \begin{tabular}[c]{@{}c@{}}
            fishing \\
            \smiley\smiley
          \end{tabular}
        };
        \node[fill=white,draw,rounded corners,right = 2*3ex of dosnd_2] (suc_2) {
          \begin{tabular}[c]{@{}c@{}}
            work\\ \$\$\$\smiley
          \end{tabular}
        };
        \node[fill=white,ExclusiveGateway,right = 31ex of decide_2] (done_2) {};

        \node[EndEvent,right = 3ex of done_2] (end_2){};
        \foreach \u/\v/\l/\s/\k in {%
          s_2/decide_2///--,%
          decide_2/snd_2///|-,
          decide_2/rcv_2///|-,
          rcv_2/dorcv_2///--,%
          snd_2/dosnd_2///--,%
          dorcv_2/dorcv_2-|nope_2.west///--,%
          dosnd_2/suc_2///--,%
          done_2/end_2///--%
        }
        \draw[-Triangle] (\u) \k node[auto,\s]{\l} (\v);
        \foreach \u/\v/\l/\s in {%
          nope_2.20/done_2//,%
          suc_2/done_2//%
        }
        \draw[-Triangle] (\u) -| node[auto,\s]{\l} (\v);
        \begin{pgfonlayer}{background}
          \node [fill=lightgray,rectangle,draw,fit={(s_2) (snd_2) (rcv_2) (end_2) (suc_2)}] (pool2) {};
          \node[rotate=90,anchor=south,outer sep=0pt] (label2) at (pool2.west){Alice};
          \node [fit={(pool2.south west) (pool2.north west) (label2) },inner sep=0pt,rectangle,draw,thick] {};
        \end{pgfonlayer}

      \end{scope}

      \scriptsize
      \path (s_1) -- (s_2) coordinate[midway] (midtmp);
      \node[fill=white,StartEvent,draw] (s_c) at ([xshift=-14ex]midtmp) {};
      \node[fill=white,draw,rounded corners,right = 3ex of s_c] (flip_c) {\color{white}{flip}};
      \node[overlay] at (flip_c.center) {\includegraphics[height=5ex]{d10.png}};
      \node[overlay] at ([xshift=-.05pt]flip_c.center) {\includegraphics[height=5ex]{d10.png}};
      \node[overlay] at ([xshift=.05pt]flip_c.center) {\includegraphics[height=5ex]{d10.png}};
      \node[overlay] at ([yshift=-.05pt]flip_c.center) {\includegraphics[height=5ex]{d10.png}};
      \node[overlay] at ([yshift=.05pt]flip_c.center) {\includegraphics[height=5ex]{d10.png}};
      \node[fill=white,ExclusiveGateway,
      right = 3ex of flip_c] (decide_c) {};
      \node[fill=white,MessageIntermediateThrowEvent,above right = 3ex of decide_c] (snd1_c)  {};
      \node[fill=white,MessageIntermediateThrowEvent,right = 2*3ex of snd1_c] (rcv1_c)  {};
      
      \node[fill=white,MessageIntermediateThrowEvent,below right = 3ex of decide_c] (snd2_c)  {};
      \node[fill=white,MessageIntermediateThrowEvent,right = 2*3ex of snd2_c] (rcv2_c)  {};

      \node[fill=white,ExclusiveGateway,below right = 3ex of rcv1_c] (done_c) {};
      \node[EndEvent,right = 2*3ex of done_c] (end_c){};
      \foreach \u/\v/\l/\s/\k in {%
        s_c/flip_c///--,%
        flip_c/decide_c///--,%
        decide_c/snd1_c///|-,%
        decide_c/snd2_c//swap/|-,%
        snd1_c/rcv1_c///--,%
        snd2_c/rcv2_c///--,%
        rcv1_c/done_c///-|,%
        rcv2_c/done_c///-|,%
        done_c/end_c///--%
      }
      \draw[-Triangle] (\u) \k node[auto,\s]{\l} (\v);
      \node[above left = 0ex of decide_c.north]{\({}\leq{6}\)};
      \node[below left = 0ex of decide_c.south] {\({}>{6}\)};

      \begin{pgfonlayer}{background}
        \node [fill=lightgray,rectangle,draw,fit={(s_c) (snd1_c) (rcv2_c) (end_c)}] (poolc) {};
        \node[rotate=90,anchor=south,outer sep=0pt] (labelc) at (poolc.west){\med};
        \node [fit={(poolc.south west) (poolc.north west) (labelc)},inner sep=0pt,rectangle,draw,thick] {};
      \end{pgfonlayer}

      \draw[bend left=30,white,shorten <=-2pt,{Circle[fill=white]}-{Triangle[fill=white]}] (dosnd_1) to (dorcv_2);
      \draw[bend right=35,white,shorten <=-2pt,{Circle[fill=white]}-{Triangle[fill=white]}] (dosnd_2) to (dorcv_1);
      \draw[bend left=30,dashed,shorten <=-2pt,{Circle[fill=white]}-{Triangle[fill=white]}] (dosnd_1) to (dorcv_2);
      \draw[bend right=35,dashed,shorten <=-2pt,{Circle[fill=white]}-{Triangle[fill=white]}] (dosnd_2) to (dorcv_1);

      \foreach \u/\v/\b in {%
        snd1_c.north/snd_1.south/left,%
        rcv1_c.south/rcv_2.north/right,%
        snd2_c.south/snd_2.north/right,%
        rcv2_c.north/rcv_1.south/left%
      }
      \draw[bend \b=20,dashed,shorten <=-2pt,{Circle[fill=white]}-{Triangle[fill=white]}] (\u) to coordinate[pos=.7](msg)  (\v);
    \end{tikzpicture}\begin{tikzpicture}[scale=.6,baseline={(0,0,0)}]
      \draw[->] (0,0,0) -- (4,0,0) node[anchor=north,pos=1] (alice){Alice};
      \draw[->] (0,0,0) -- (0,4,0) node[anchor=west,pos=1] {Bob};
      \draw[->] (0,0,0) -- (0,0,4) node[anchor=north west,pos=.95] {\(\$\)};

      \foreach \x/\y/\z in {%
        1
        /3
        /3} {
        \foreach \u/\v/\w in  {%
          2
          /0
          /2
        }
        {
          \draw[opacity=.4,red,dashed](\x,\y,0) -- (\u,\v,0);
          \draw[opacity=.4,densely dotted](\x,\y,\z) -- (\x,\y,0);
          \draw[opacity=.4,densely dotted](\u,\v,\w) -- (\u,\v,0);

          \draw[opacity=.4,red,dashed](0,\y,\z) -- (0,\v,\w);
          \draw[opacity=.4,densely dotted](\x,\y,\z) -- (0,\y,\z);
          \draw[opacity=.4,densely dotted](\u,\v,\w) -- (0,\v,\w);

          \draw[opacity=.4,red,dashed](\x,0,\z) -- (\u,0,\w);
          \draw[opacity=.4,densely dotted](\x,\y,\z) -- (\x,0,\z);
          \draw[opacity=.4,densely dotted](\u,\v,\w) -- (\u,0,\w);

        }
      }

      \draw[thick,red]
      (1
      ,3
      ,3
      )
      --
      (2
      ,0
      ,2
      );
      \node (med) at (1.6,1.2,2.4) [circle,fill,inner sep=2pt,opacity=.6]{};
      \node  at (0,1.2,2.4) [circle,fill,inner sep=.5pt,opacity=1,gray]{};
      \node  at (1.6,0,2.4) [circle,fill,inner sep=.5pt,opacity=1,gray]{};
      \node  at (1.6,1.2,0) [circle,fill,inner sep=.5pt,opacity=1,gray]{};

      \draw[thick,red]
      (2
      ,0
      ,2
      )
      -- (med.center);

    \end{tikzpicture}
    \caption{The \emph{To work or not to work?} collaboration}
    \label{to-work-not-coll-med}
  \end{figure}
  Bob sketches a first plan to synchronize their schedules, depicted in Figure~\ref{fig:work-life-balance}.
  However, %
  Alice points out that this diagram is no good, %
  among others because 
  it mixes data and event-based decisions.
  Alice remarks that this can be avoided by throwing a coin %
  in the morning, to decide who goes to work. %
  But that leaves the question of who should throw the coin and communicate the result. 
  Alice suggests that their secretary \med could help them out %
  by rolling a 10-sided die 
  each morning and %
  notifying them about who is going to go to work that day.

  Alice comes up with the more elaborate process %
  shown in \figref{to-work-not-coll-med}, %
  which lets Bob and Alice work 60\% and 40\% of the days,  %
  resp.,
  because Alice is 50\% more efficient than Bob. %
  In fact, %
  every point on the solid red line\footnote{%
    The projections to the \(x\)-\(y\), \(y\)-\(z\), and \(x\)-\(z\) plane
    are rendered as dashed lines for clarity. 
  }
  on the right in \figref{to-work-not-coll-med} %
  correspond to a choice of odds for going to work;
  the given \bpmn model corresponds to the dot slightly off the middle. %
  The coordinates of each point describe the expected gain in personal enjoyment %
  and profits for the company, respectively. %
  Thus, %
  there is a choice of parameter for any suitable work life balance. %
  One might want to maximize for distance from the origin, %
  but questions of fairness can be addressed as well.%
  \reversemarginpar\thnote{%
    Homework for the interested reader \tt \^{}\_\^{} \\
    IW: add a ref to papers that go in this direction? (I should be able to dig one out)
  }\normalmarginpar %
  \ However, Alice's criterion for choosing the odds is equal contribution %
  of both co-founders to the company income. 
\end{iflong}

In game theoretic terminology, %
\med is taking the role of a common source of randomness %
that is independent of the state of the game %
and does not need to observe the actions of the players. %
The specific formal notion that we shall use is that of an %
\emph{autonomous correlation device} %
{\cite[Definition~2.1]{Solan2002CorEq}}.

\begin{definition}[Autonomous correlation device%
  ]
  \label{def:auto-corr-device}
  An \emph{autonomous correlation device} is %
  a family of pairs \(\mathcal{D} = \device{\plr}{n}\)  %
  (that is indexed over natural numbers \(n\in\mathbb{N}\)) %
  each of which consists of
  \begin{itemize}
  \item %
    a family of finite sets of \emph{signals} \(\sig{\plr}{n}\), %
    (additionally) indexed over players;
    \ and
  \item %
    a function \(\sigd[n]\) that maps lists of
    signal vectors \(\vect \m {n-1} \in \prod_{k=1}^{n-1}\sigs[k]\) to
    probability distributions %
    \(\sigd[n]\vect \m {n-1} \in \Delta(\sigs[n])\) 
    over the Cartesian product \(\sigs[n] = \prod_{\plr=1}^{\smash{|\N|}}\sig \plr n \)
    of all signal sets~\(\sig \plr n \). %
  \end{itemize}
\end{definition}
We shall refer to  operators of autonomous correlation devices as %
\emph{mediators}, which guide the actions of players during the game. %

Each correlation device for a game induces an extended game, 
which proceeds in \emph{stages}. %
\ifLong{Concerning the example of Bob and Alice, %
\figref{to-work-not-coll-med} is a description of the extended game of %
the collaboration that Bob drafted %
combined with the simplest possible correlation device -- %
assuming that each of the roles restarts at the end of the day.\par}  %
In general, %
given a game and an autonomous correlation device, %
the \(n\)-th stage begins with the mediator %
drawing a signal vector \(\m_n \in \sigs[n] = \prod_{\plr=1}^{\smash{|\N|}}\sig \plr n\) %
according to the device distribution~\(\sigd[n]\vect \m {n-1}\) -- e.g., \med rolling the die -- %
and sending the components to the respective players -- the sending of messages to Bob and Alice (in one order or the other). 
Then, each player~\(\plr\) chooses an available action~\(\act[\plr]_n\). %
This choice can be based on 
the respective component \(\m^{\plr}_n\) of the signal vector~\(\m_n\in \sigs[n]\), %
information about previous states \(\state[k]\) of the game~\(G\), %
and moves \(\act[{\plr[']}]_k\) of (other) players from the history.%
\footnote{%
  In the present paper, %
  we only consider games of perfect information, %
  which is suitable for business processes in a single organization or which are %
  monitored on a blockchain. %
}
After all players made their choice, %
we obtain an action profile~\(\act_n = \vectup{\act_n}{\smash{|\N|}}\). %
\ifLong{The reader may note, that the process model does not give Alive and Bob a lot of choice: once Mrs.\ Medina sent them the message with the decision, they can only notify each other (respectively receive this notification), and then work or surf / fish. However, in our game setting they can also choose the ``idle'' action, e.g., instead of going to work.}
\iwnote{Buffer overflow in parser. Der Satz ist 3x so lang wie er sein sollte\\
TH: habe drei Sätze draus gemacht.\\
IW: soweit gut. Hier kann man jetzt aber fragen, inwiefern Alice und Bob eine Wahl haben -- wenn Sie den Prozess ``implementiert'' haben. Weiss nicht, ob wir die Diskussion aufmachen wollen? Also: wenn Sie dem Prozess folgen, dann ist ja nach dem Würfelwurf alles automatisiert und es gibt jeweils nur eine Aktion oder ``idle''. Ist das dann die Auswahl: Prozess folgen oder ``idle''?\\
\ul{TH} genau richtig; müssen wir wohl doch 'nen Techreport schreiben ... 
\ul{IW} done
}

While playing the extended game described above, %
each player makes observations about the state %
and the actions of players; %
the role of the mediator is special %
insofar as it does not need and is also not expected to %
observe the run of the game. %
The ``local'' observations of each player are the basis of their strategies. %
\begin{definition}[Observation, strategy, strategy profile]
  \label{def:obs-n-strats}
  \ifLong{For a natural number \(n\in\mathbb{N}\), 
  an \emph{observation at stage~\(n\)} by player~\(\plr\)
  is %
  a tuple
  \(h=\langle \state[1],\m[\plr]_1,\act_1, \dotsc,\state[n-1],\m[\plr]_{n-1},\act_{n-1}, \state[n],\m[\plr]_n \rangle\)
    that
    consists of}%
  \ifShort{%
    An \emph{observation at stage~\(n\)} by player~\(\plr\)
    is %
  a tuple
  \(h=\langle \state[1],\m[\plr]_1,\act_1, \dotsc,\state[n-1],\m[\plr]_{n-1},\act_{n-1}, \state[n],\m[\plr]_n \rangle\)
  with
  }
    \begin{itemize}
  \item   one state~\(\state[k]\), signal~\(\m[\plr]_k\), and action profile~\(\act_k\), %
    for each number~\(k<n\), %
    
  \item 
    the \emph{current} state~\(\state[n]\), also denoted by \(\state[h]\), and
  \item  the \emph{current} signal \(\m[\plr]_n\). 
  \end{itemize}
  The set of all observations is denoted by %
  \(
  \obs{n}{\plr}\).
  The union \(\obs{}{\plr}= \bigcup_{n \in \mathbb{N}} \obs{n}{\plr}\) of %
   observations at any stage is the set of %
  \emph{observations} of player~\(\plr\).  %
  A \emph{strategy} is a map %
  \(\sigma^{\plr} \colon \obs{}{\plr} \to \Delta(\acts[\plr])\) %
  from observations to probability distributions %
  over actions that are available at the current state of histories, %
  i.e., %
  \(\sigma^{\plr}_{h}(\act[\plr]) =0\)
  if \(\act[\plr]\notin \acts[\plr](\state[h])\),
  for all histories~\(h \in \obs{}{\plr}\). %
  A \emph{strategy profile} is a player-indexed family of strategies %
  \(\{\sigma^{\plr}\}_{\plr\in\N}\).
\end{definition}
Thus, %
each of the players observes the history of other players, %
including the possibility of punishing other players for not heeding %
the advice of the mediator. %
This is possible %
since signals might give (indirect) information %
concerning the \mbox{(mis-)}\allowbreak{}behavior of players in the past, %
as remarked by Solan and Vieille \cite[p.~370]{Solan2002CorEq}: %
by revealing information about proposed actions of previous rounds, %
players can check for themselves whether %
some player has ignored some signal of the mediator.%

The data of a game, a correlation device, and a strategy profile induce %
probabilities for finite plays of the game,  
which in turn determine the expected utility of playing the strategy. %
Formally, %
an autonomous correlation device and a strategy profile with strategies for every player yield %
a probabilistic trajectory of a sequence
of ``global'' states, signal vectors of all players, and complete action profiles, %
 dubbed \emph{history}. %
The formal details are as follows. 
\begin{definition}[History and its probability]
  \label{def:hist-n-their-probs}
  \ifLong{Given a natural number \(n\in\mathbb{N}\), %
  a \emph{history at stage~\(n\)} is a tuple
  \(h = \langle \state[1],\m_1,\act_1, \dotsc,\state[n-1],\m_{n-1},\act_{n-1}, \state[n],\m_n \rangle\) 
  that consists of}%
  \ifShort{%
    A \emph{history at stage~\(n\)} is a tuple
  \(h = \langle \state[1],\m_1,\act_1, \dotsc,\state[n-1],\m_{n-1},\act_{n-1}, \state[n],\m_n \rangle\) 
  that consists of}
  \begin{itemize}
  \item one state~\(\state[k]\), signal vector~\(\m_k\), and action profile~\(\act_k\), %
    for each number~\(k<n\), %
  \item 
    the \emph{current} state~\(\state[n]\), often denoted by \(\state[h]\), and
  \item  the \emph{current} signal vector~\(\m_n\). 
  \end{itemize}
  The set of all histories at state~\(n\) is denoted by 
  \(
    \hist{n}\).
  The union \(\hist{} = \bigcup_{n\in\mathbb{N}}\hist{n}\) of histories at arbitrary stages is %
  the set of \emph{finite histories}. %
  The \emph{probability} of a finite history\ifLong{\footnote{%
    Note that these probabilities induce %
    the probability measure on infinite histories from %
    Solan and Vieille \cite{Solan2002CorEq}; %
    thus, we extend the notation of \textit{op.~cit.} %
    to finite histories. %
  }}\(h= \langle \state[1],\m_1,\act_1, \dotsc,\state[n-1],\m_{n-1},\act_{n-1}, \state[n],\m_n \rangle\) in the context of %
  a correlation device~\(\mathcal{D}\), %
  an initial state~\(\state\), and %
  a strategy profile~\(\sigma\) is defined %
  as follows, by recursion %
  over the length of histories. %
  \begin{description}
  \item[\(n=1\):]
    \(%
      \PP{\state}{\sigma}(\langle\state[1],\m_1\rangle) = %
      \begin{cases} 
        0
        & \text{if }\state\neq \state[1]
        \\
        \sigd[1]\langle\rangle(\m_1)
        & \text{otherwise}
      \end{cases}
    \)~\\[1ex]
  \item[\(n>1\):]
    {
      \newcommand{\nplusone}{n}
    \(%
      \PP{\state}{\sigma}(\langle \hbar,\act_{n-1},\state[\nplusone],\m_{\nplusone}\rangle%
      ) %
      =
      \underbrace{p_{\langle\hbar\rangle}(\act_{n-1})}_{\makebox[0pt]{\(\scriptstyle\prod_{\plr \in \N} \sigma^{\plr}_{\langle\hbar\rangle}(\act[\plr]_{n-1})\)}}
      \prob{\state[\nplusone]}{\state[n-1]}{\act_{n-1}}
      \underbrace{p_{\sigd[n-1]}(\m_{\nplusone})
       }_{\makebox[0pt]{\(\scriptstyle\sigd[n-1]\langle\m_1,\dotsc,\m_{n-1}\rangle(\m_{\nplusone})\)~~~}}
     \)}%

  \end{description}
\end{definition}
Again, %
note that the autonomous correlation device does not %
``inspect'' the states of a history, in the sense that %
the distributions over signal vectors \(\sigd[n]\) are %
\emph{not} parameterized over \emph{states} from the history, %
but only over previously drawn \emph{signal vectors} -- %
whence the name. 
\begin{definition}[Mean expected payoff]
  The \emph{mean expected payoff} of player~\(\plr\) for stage~\(n\)
  is
  \(
    \epay n (\mathcal{D},\state,\sigma) %
    = %
\sum_{h \in \obs {n+1} {}} \frac{\PP{\state}{\sigma}(h)}{n}\sum_{k=1}^n \pay[\plr](\state[k],\act_k)
\)
where \(h=\langle \state[1],\m_1,\act_1, \dotsc \act_{n}, \state[n+1],\m_{n+1} \rangle\). %
\end{definition}

At this point, 
we can address the question of what a good strategy profile is  %
and fill in all the details of the idea that an equilibrium is %
a strategy profile that does not give players any good reason to %
deviate unilaterally. %
We shall tip our hats to game theory and 
use the notation \((\strat[\plr{}]{},\sigma^{-\plr})\) %
for the strategy profile which is obtained by %
``overwriting'' the single strategy \(\sigma^{\plr}\) of player~\(\plr\) %
with a  strategy~\(\strat[\plr{}]{}\) (which might, but does not have to be different); %
thus, %
the expression `\((\strat[\plr{}]{},\sigma^{-\plr})\)' denotes the unique strategy subject to equations
\((\strat[\plr{}]{},\sigma^{-\plr})^{\plr} = \strat[\plr{}]{}\) and
\((\strat[\plr{}]{},\sigma^{-\plr})^{\plr[']} = \sigma^{\plr[']}\) %
(for~\(\plr\neq\plr[']\)). 

\begin{definition}[Autonomous correlated \(\boldsymbol\varepsilon\)-equilibrium]
  Given a positive real~\(\varepsilon >0\), %
  an \emph{autonomous correlated \(\varepsilon\)-equilibrium} %
  is a pair \(\langle \mathcal{D},{\sigma^*}\rangle\), %
  which consists of %
  an autonomous correlation device~\(\mathcal{D}\) %
  and a strategy profile~\({\sigma^*}\) for which %
  there exists a natural number \(n_0\in \mathbb{N}\) such that %
  for any alternative strategy~\(\sigma^{\plr}\) of any player~\(\plr\), %
  \begin{equation}
    \label{eq:eps-corr-eq-solan-vieille}
    \epay {n} (\mathcal{D},\state,{\sigma^*}) \geq
    \epay {n} \left(\mathcal{D},\state,(\sigma^{\plr},{\sigma^*}{}^{-\plr})\right) -\varepsilon
  \end{equation}
  Equation~\eqref{eq:eps-corr-eq-solan-vieille}
  holds, %
  for all \(n\geq n_0\) and all states~\(\state \in\states\).
\end{definition}
Thus, %
a strategy is an autonomous correlated \(\varepsilon\)-equilibrium %
if the benefits that one might reap in the long run %
by unilateral deviation from the strategy are negligible %
as \(\varepsilon\) can be arbitrarily small. %
In fact, %
other players will have ways to %
punish deviation from the equilibrium \cite[\S~3.2]{Solan2002CorEq}. %


\subsection{Petri nets and their operational semantics}
\label{sec:petri-nets}
Elementary net systems~\cite{RozenbergE96ENS} are expressive enough for %
the purposes of the present paper; %
however, %
for the sake of simplicity, %
we refer to them as Petri nets.%
\footnote{%
  In fact, %
  the results of the paper apply to the general case, %
  \textit{mutatis mutandis}. %
  Also, %
  note that we leave the flow relation implicit %
  as its identity is immaterial for %
  the execution semantics. %
}
\begin{definition}[Petri net, marking, and marked Petri net]
  A \emph{Petri net} is a tuple
  \(\Net = (\pl,\tr,\pre,\post)\) 
  that consists of
  \begin{itemize}
  \item a finite set of \emph{places} \(\pl\); 
  \item a finite set of \emph{transitions} \(\tr\); and 
  \item two functions \(\pre,\post \colon \tr \to \wp \pl\setminus\{\varnothing\}\) that %
    assign to each transition \(t\in\tr\) its %
    \emph{pre-set \(\pre[t] \subseteq \pl\)} and \emph{post-set \(\post[t] \subseteq \pl\)}, %
    respectively, which are both required to be non-empty.%
  \end{itemize}
  A \emph{marking} of a Petri net~\(\Net\) is a multiset of places~\(\M\), %
  i.e., a function \(\M \colon \pl \to \mathbb{N}\) that assigns to %
  each place \(p\in\pl\) a non-negative integer~\(\M(p) \geq 0\). %
  A \emph{marked Petri net} is %
  a tuple \(\Net = (\pl,\tr,\pre,\post,\M[0])\) whose %
  first four components \((\pl,\tr,\pre,\post)\) are a Petri net %
  and whose last component~\(\M[0]\) %
  is the \emph{initial marking}, %
  which is a marking of the latter Petri net. %
\end{definition}
One essential feature of Petri nets is 
the ability to execute several transitions concurrently -- 
possibly several occurrences of one and the same transition. %
However, %
we shall only encounter situations in which a set of transitions fires. %
Again, %
for the sake of simplicity, %
we shall use the general term \emph{step}. %
We fix a Petri net \(\Net = (\pl,\tr,\pre,\post)\) %
for the remainder of the section. %
\begin{definition}[Step, step transition, reachable marking]
  \label{def:step-trans-PN}
  A \emph{step} in the net~\(\Net\) is a set of transitions %
  \(\step \subseteq \tr\). %
  The \emph{transition relation} of a step \(\step \subseteq \tr\)
  relates a marking~\(\M\) to another marking~\(\M'\), %
  in symbols \(\M \fire{\step}\M[']\), %
  if the following two conditions are satisfied,
  for every place \(p \in \pl\). 
  \begin{enumerate}
  \item \(\M(p) \geq |\{t \in \step \mid p \in \pre[t]\}|\)
  \item \(\M['](p) = \M(p) - |\{t \in \step \mid p \in \pre[t]\}| +|\{t \in \step \mid p \in \post[t]\}|\)
  \end{enumerate}
  We write \(\M\fire{}\M[']\) if \(\M\fire{\step}\M[']\) holds for some step~\(\step\) %
  and denote the reflexive transitive closure of the relation~\(\fire{}\) by~\(\fire{}^*\). %
  A marking~\(\M[']\) is \emph{reachable} in a marked Petri net
  \(\Net = (\pl,\tr,\pre,\post,\M[0])\) %
  if \(\M[0]\fire{}^*\M[']\) holds (in the net \((\pl,\tr,\pre,\post)\)). %
\end{definition}
For a single transition~\(t \in \tr\), %
we write \(\M \fire{t}\M[']\) instead of \(\M \fire{\{t\}}\M[']\). %
Note that the empty step is always fireable, %
i.e., for each marking~\(m\), %
we have an “idle” step \(\M \fire{\varnothing} \M\). %

Recall that a marked Petri net \(\Net = (\pl,\tr,\pre,\post,\M[0])\) is \emph{safe} if
all reachable markings~\(\M[']\) have at most one token in any place, %
i.e., if they satisfy \(\M['](p)\leq 1\), %
for all \(p \in \pl\). %
Thus, a marking~\(\M\) corresponds to a set \(\hat\M\subseteq \pl\) %
satisfying \(p \in \hat\M\) iff \(\M(p)>0\); %
for convenience, %
we shall identity a safe marking~\(\M\) with its set of places~\(\hat\M\). %
The main focus will be on Petri nets that are %
safe and \emph{extended free choice}, %
i.e., if the pre-sets of two transitions have a place in common, %
the pre-sets coincide. %
Finally,
recall that the \emph{conflict relation}, denoted by~\(\#\), 
relates two transitions if their pre-sets intersect, %
i.e., \(t \mathrel{\#} t'\) if \(\pre[t] \cap \pre[t']\neq \varnothing\), %
for \(t,t'\in\tr\); %
for extended free choice nets, %
the conflict relation is an equivalence relation. %


\section{Incentive alignment}
\label{sec:incentive-alignment}
Soundness of business processes %
in the sense of Van der Aalst~\cite{Aalst97soundness}  %
implies termination if transitions are governed by 
a strongly fair scheduler~\cite{vanderAalst2011}\ifShort{; indeed, }%
\ifLong{: %
  \begin{quote}
    If we assume a strong notion of fairness,
    then the first requirement implies that %
    eventually state [o] is reached. Strong fairness, %
    sometimes also referred to as “impartial” or “recurrent” [KA99], means that %
    in every infinite firing sequence, each transition fires infinitely often. %
    Note that weaker notions of fairness are not sufficient, see Figure 2 in [KA99]. 
  \end{quote}
Indeed,
}%
such a scheduler fits the intra-organizational setting. %
\iwnote{bitte checken in long version - zitiere ich das richtig?}
However, %
as discussed for the order-to-cash process model, %
unfair scheduling practices could arise in the inter-organizational setting %
if undesired behavior yields higher profits. %
We consider incentive alignment to rule out %
scenarios that lure actors into counterproductive behavior.  %
We even can check whether %
all activities in a given \bpmn model with utility annotations %
are relevant and profitable.

As \bpmn models have established Petri net semantics~\cite{dijkman2008semantics}, %
it suffices to consider the latter for the game theoretic aspects of incentive alignment. %
As a preparatory step, %
we extend Petri nets with utility functions %
as pioneered by von Neumann and Morgenstern~\cite{morgenstern1953theory}. %
Then we describe two ways to associate a stochastic game %
to a Petri net with transition-based utilities: %
the first game retains the state space %
and the principal design choice concerns transition probabilities; %
the second game is the restarting
version of the first game, %
which will turn out to be better suited to analyze business processes %
and even gives rise to %
a tight connection with the soundness property %
of workflow nets~\cite{Aalst97soundness}. %
Finally, %
we define incentive alignment in full formal rigor based on stochastic games and %
show that %
the soundness property for workflows nets can be ``rediscovered'' as %
a special case of incentive alignment; %
in other words, %
soundness is conserved, and therefore \iwnote{Bitte checken ob du mit dem Halbsatz (in dieser Zeile) glücklich bist}
incentive alignment is a conservative extension of soundness. %

\subsection{Petri nets with utility and role annotations}
We assume that %
costs (respectively profits) are incurred (resp.\ gained) %
per task 
and that, in particular, utility functions do not depend on the state. %
Note that the game theoretic results do not require this assumption; %
however, %
this assumption does not only avoid clutter, but  %
also retains the spirit of the \textsc{abc} method~\cite{AMA3rdEdition} %
and is in line with the work of Herbert and Sharp~\cite{HerbertSharp12StochMC}. %
\begin{definition}[Petri net with transition payoffs and roles] 
  \label{def:petri-net-payoffs-roles}
  For a set of roles~\(\R\), %
  a \emph{Petri net with transition payoffs and roles} %
  is a triple \((\Net,\pay,\rl)\) where
  \begin{itemize}
  \item \(\Net = (\pl,\tr,\pre{},\post{},\M[0])\) is a marked Petri net with initial marking~\(\M[0]\),
  \item 
    \(\pay \colon \R \to T \to \mathbb{R}\) is a \emph{utility function}, and %
  \item \(\rl \colon T \rightharpoonup \R\) is %
    a partial
		function, assigning at most one role to each transition. 
  \end{itemize}
  The \emph{utility} \(\pay[\plr](\step)\) of a step~\(\step \subseteq T\) is %
  the sum of the utilities of its elements, %
  i.e., \(\pay[\plr](\step) = \sum_{t \in \step} \pay[\plr](t)\), %
  for each role \(\plr\in\R\). 
\end{definition}
As a consequence of the definition, %
the idle step has zero utility. %
We have included the possibility that some of the transitions are %
not controlled by any of the roles (of a \bpmn model) %
by using a partial function from transitions to roles; %
we take a leaf out of the game theorist's book %
and attribute the missing role to \emph{nature}. 

\begin{figure}[tb]
  \centering
  \tikzstyle{place}+=[minimum size=3ex]
  \begin{tikzpicture}[>=latex',semithick,scale=.8,baseline={(p0.base)}]
  \node[place,tokens=1,label=below:{\(p_0\)}] (p0) at (0,0){};
  \node[place,label=below:{\(p_1\)}] (p1) at (2,0) {};
  \node[place,label=below:{\(p_2\)}] (p2) at (4,0) {};
  \node[place,label=below:{\(p_3\)}] (p3) at (6,0) {};

  \node[transition
  ] (t0) at (1,0) {{\(t_0\)}};

\node[transition
] (t) at (3,1) {{\(t\)}};
\node[transition
] (t') at (3,-1) {\(t'\)};
\node[transition
] (t1) at (5,0) {{\(t_1\)}};
  \draw[->] (p0) -- (t0);
  \draw[->] (t0) -- (p1);
  \foreach \p/\t/\q in {p1/t/p2,p2/t'/p1,p2/t1/p3}
  {
    \draw[->] (\p) --(\t);
    \draw[->] (\t) --(\q);
  };

\end{tikzpicture}
\qquad
  \begin{tikzpicture}[>=latex',semithick,scale=.8,baseline={(p0.base)}]
  \node[place,tokens=1,label=below:{\(p_0\)}] (p0) at (0,0){};
  \node[place,label=below:{\(p_1\)}] (p1) at (2,0) {};
  \node[place,label=below:{\(p_2\)}] (p2) at (4,0) {};
  \node[place,label=below:{\(p_3\)}] (p3) at (6,0) {};

  \node[transition,label=below:{\color{blue}\(a_{[-1]}\)},fill=blue!40!white] (t0) at (1,0) {{\(t_0\)}};

  \node[transition,label=right:{\color{red}\(b_{[+1]}\)},fill=red!40!white] (t) at (3,1) {{\(t\)}};
  \node[transition,label=left:{\color{violet}$c_{[+1]}$},fill=violet!40!white] (t') at (3,-1) {\(t'\)};
  \node[transition,%
  label=below:{%
    \begin{tabular}[t]{@{}l@{}}
      \color{blue}$a_{[+2]}$\\%
      \color{violet}$c_{[+2]}$%
    \end{tabular}
  },%
  fill=blue!40!white] (t1) at (5,0) {{\(t_1\)}};

  \draw[->] (p0) -- (t0);
  \draw[->] (t0) -- (p1);
  \foreach \p/\t/\q in {p1/t/p2,p2/t'/p1,p2/t1/p3}
  {
    \draw[->] (\p) --(\t);
    \draw[->] (\t) --(\q);
  };

\end{tikzpicture}
  \caption{Extending Petri nets with role and utility annotations}
  \label{fig:ext-pn}
\end{figure}
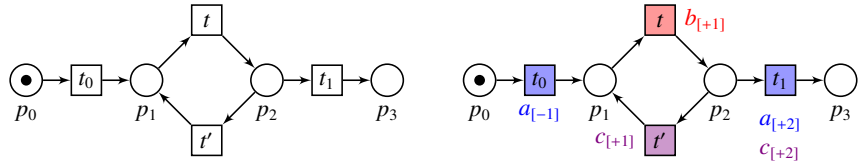
\figref{fig:ext-pn} displays a  %
Petri net on the left. %
The names of the places $p_1, \ldots, p_4$ will be convenient later. 
In the same figure on the right, %
we have added annotations that carry information
concerning roles, costs, and profits %
in the form of lists of role-utility pairs next to transitions. %
E.g.,
the transition $t_0$ is assigned to role~$a$ and %
firing \(t_0\) results in utility~$-1$ for~\(a\),
i.e., one unit of cost. %
The first role in each list %
denotes responsibility for the transition %
and we have omitted entries with zero utility. %
We also have colored transitions with the same color as the role assigned to it.


There are natural translations from \bpmn models with %
payoff annotations for activities to %
Petri nets with payoffs and roles %
(relative to any of the established Petri net semantics for %
models in \bpmn~\cite{dijkman2008semantics}). %
If pools are used,
we take one role per pool and %
each task is assigned to its enclosing pool; %
for pairs of sending and receiving tasks or events, %
the sender is responsible for the transition to be taken. %
The only subtle point concerns the role of nature. %
When should we blame nature for the data on which choices are based? %
The answer depends on the application at hand. %
For instance, %
let us consider the order-to-cash model of~\figref{fig:o2c-basic}: %
whether or not the supplier is out of stock is %
only partially within the control of supplier; %
also, %
we can avoid the need to meticulously model %
all factors that might lead to damage of a shipped good %
by blaming nature. %
In a first approximation, %
we simply let nature determine the state of the stock and %
damage goods at will. %
\thnote{%
  ... or something along these lines.%
}

\subsection{Single process instances and the base game with fair conflicts}
We now describe how each Petri net with transition payoffs and roles %
gives rise to a stochastic game, based on two design choices: %
each role can execute only one (enabled) transition at a time and %
conflicts are resolved in a probabilistically fair manner. 
For example, %
for the net on the right in \figref{fig:ext-pn}, %
we take four states \(p_0,p_1,p_2,p_3\), %
one for each reachable marking. %
The Petri net does not prescribe %
what should happen if roles~\(a\) and~\(c\) both try to %
fire transitions~\(t_1\) and~\(t'\) simultaneously %
if the game is in state~\(p_2\). %
The simplest probabilistically fair solution consists of flipping a coin; %
depending on the outcome, %
the game continues in state~\(p_1\) or in state~\(p_3\). %
For the general case,
let us fix a safe, extended free-choice net \((\Net,\pay,\rl)\) with payoffs and roles %
whose initial marking is~\(\M[0]\). %
\begin{definition}[The base game with fair conflicts]
  \label{def:basic-game}
  Let \(\Xs\subseteq \wp\tr\) be the partitioning of the set of transitions into equivalence classes of %
  the conflict relation on the set of transitions,  %
  i.e., \(\Xs=\{ \{t' \in \tr \mid t' \mathrel{\#} t \} \mid t \in\tr\}\); %
  its members are called \emph{conflict sets}.
  Given a safe marking~\(\M\subseteq\pl\) and a step~\(\step\subseteq\tr\), %
  a \emph{maximal \(\M\)-enabled sub-step} is a step \(\step[']\) %
  that is enabled at the marking~\(\M\), 
  is contained in the step~\(\step\), and %
  contains one transition of each conflict set that has a non-empty intersection with the step, %
  i.e., such that all three of %
  \(\M\fire{\step[']}\),  %
  \(\step['] \subseteq \step\) and %
  \(|\step[']| = |\{ \X \in \Xs \mid \step \cap \X \neq \varnothing\}|\) hold. %
  We write \(\step['] \sqsubseteq_{\M} \step\) if 
  the step~\(\step[']\) is a maximal \(\M\)-enabled sub-step of the step~\(\step\). %

  The \emph{base game with fair conflicts} \(\game\) of the net \((\Net,\pay,\rl)\) is defined as follows. 
  \begin{itemize}
  \item %
    The set of players \(\N \defeq \R \cup \{\bot\}\) is the set of roles and %
		\emph{nature},~\(\bot\notin \R\). \iwnote{no longer so fresh}
  \item %
    The state space \(\states\) is the set of reachable markings, %
    i.e., \(\states = \{\M[']\mid \M[0]\fire{}^*\M[']\}\). 
  \item %
    The action set of an individual player~\(\plr\) is
    \(\acts[\plr] \defeq \{\varnothing\} \cup \{ \{t\} \mid t \in \tr, \rl(t) = \plr\}\), %
    which consists of the empty set and possibly singletons of transitions, %
    where \(\rl(t) =\bot\) if \(\rl(t)\) is not defined. %
    We identify an action profile \(\act \in \acts = \prod_{\plr=1}^{|\N|}\acts[\plr]\) with
    the union of its components \(\act\equiv\bigcup_{\plr\in\N} \act[\plr]\).  %
  \item %
    In a given state $\M$, the available actions of player~\(\plr\) are the enabled transitions, %
    i.e.,
    \(\acts[\plr](\M) = \{ \{t\} \in\acts[\plr] \mid \M\fire{t}\}\).
  \item%
    \(%
    \prob{\M[']}{\M}{\step} =
    \sum_{{
        \step[']\sqsubseteq_{\M}\step \text{ s.t. }
        \M\fire{\step[']}\M[']
      }}
    \prod_{{\X\in\Xs\text{ s.t. } \step\cap\X\neq\varnothing}}\frac{1}{
      |\step \cap \X|}
    \) 
  \item %
    \(\pay[\plr](\M,\step) = \sum_{t\in\step}\pay[\plr](t)\) %
    if \(\plr\in\R\) and \(\pay[\bot](\M,\step) = 0\), %
    for all \(\step \subseteq \tr\), and \(\M\subseteq\pl\).  
  \end{itemize}
\end{definition}
Let us summarize the stochastic game of %
a given Petri net with transition payoffs and roles. %
The stochastic game has the same state space as the Petri net, %
i.e., the set of reachable markings.
The available actions for each player at a given marking %
are the enabled transitions that are assigned to the player, %
plus the ``idle'' step. 
Each step  comes with a state-independent payoff, %
which sums up the utilities of each single transition,  
for each player~\(\plr\). %
In particular, if all players chose to idle, %
the corresponding action profile is the empty step~\(\varnothing\), %
which gives \(0\)~payoff. %
The transition probabilities implement the idea that %
all transitions of an action profile get a fair chance to fire, %
even if the step contains conflicting transitions. %
Let us highlight the following two points for a fixed marking and step: %
(1)~given a maximal enabled sub-step, %
we roll a fair ``die'' for each conflict set %
where the ``die'' has one ``side'' for each transition in the conflict set that %
also belongs to the sub-step %
(unless the ``die'' has zero sides); %
(2)~there might be several choices of maximal enabled sub-steps %
that lead to the same marking. %
In the definition of transition probabilities,  %
the second point is captured by %
summation over maximal enabled sub-steps of the step %
and the first point corresponds to a product of probabilities for %
each outcome of ``rolling''  one of the ``dice''. %

We want to emphasize that if additional information about %
transition probabilities are known, it should be incorporated. %
In a similar vein, %
one can adapt the approach of Herbert and Sharp~\cite{HerbertSharp12StochMC}, %
which extends the \bpmn language with %
probability annotations for choices. %
However, 
as we are mainly interested in \textit{a priori} analysis, %
our approach might be preferable since it avoids arbitrary parameter guessing. %
The most important design choice that we have made concerns the role of {nature}, %
which we consider as absolutely neutral;
it is not even  concerned with progress of the system %
as it does not benefit from transitions being fired. %

Now, %
let us consider once more  the order-to-cash process. %
If the process reaches the state in which customer's next step is payment, %
there is no incentive for paying. %
Instead, customer can choose to idle, \emph{ad infinitum}. %
In fact, %
this strategy yields maximum payoff for the customer. %
The \bpmn-model does not give any means for punishing %
customer's payment inertia. %
However, %
there is not even any incentive for shipper to %
pick up the goods! %
Incentives in the single instance scenario can be fixed, 
e.g., by adding escrow. 
However, %
in the present paper, %
we shall give yet a different perspective: %
we repeat the process indefinitely. 




\subsection{Restarting the game for multiple process instances}
The single instance game from Definition~\ref{def:basic-game} has one major drawback. %
It allows to analyze only a single instance of a business process. %
We shall now consider a variation of the stochastic game, %
which addresses the case of multiple instances in the simplest form. 
The idea is the same as the one for looping versions of workflow nets %
that have been considered in the literature, %
e.g., to relate soundness with liveness \cite[Lemma~5.1]{vanderAalst2011}: %
we simply restart the game in the initial state %
whenever we reach a final marking. %
\begin{definition}[Restart game]%
  \label{def:restart-game}%
    %
  A safe marking \(\M\subseteq \pl\) is \emph{final} if %
  it does not intersect with any pre-set, %
  i.e., if \(\M \cap \pre[t] = \varnothing \), %
  for all transitions~\(t \in \tr\); %
  we write \(\M\downarrow\) if the marking~\(\M\) is final, %
  and \(\M\not\,\downarrow\) if not. %
  Let  \(\game\)  be the {base game with fair conflicts}  of the net \((\Net,\pay,\rl)\). %
  The \emph{restart game} of the net \((\Net,\pay,\rl)\)
  is the game \(\rgame\) with %
  \begin{itemize}
  \item \(\rstates = \states \setminus \{\M['']\subseteq \pl\mid \M['']\downarrow\}\);
  \item \(\rprob{\M[']}{\M}{\step} =
    \begin{cases}
      \prob{\M[']}{\M}{\step} &\text{if } \M['] \neq \M[0] 
      \\
      \prob{\M[0]}{\M}{\step} + \sum_{\M['']\downarrow} \prob{\M['']}{\M}{\step} &\text{if } \M[']=\M[0] 
    \end{cases}
    \) 
  \end{itemize}
  for all \(\M,\M[']\in\rstates\); and
  the available actions restricted to \(\rstates \subseteq \states\), 
  i.e., %
  \(\racts[\plr](\state) = \acts[\plr](\state)\), %
  for \(\state \in\rstates\). %
\end{definition}
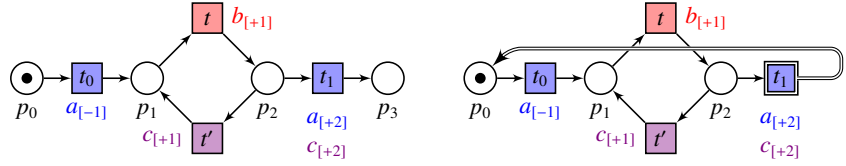
\begin{figure}[tb]
  \centering
    \tikzstyle{place}+=[minimum size=3ex]
    \begin{tikzpicture}[>=latex',semithick,scale=.8]
  \node[place,tokens=1,label=below:{\(p_0\)}] (p0) at (0,0){};
  \node[place,label=below:{\(p_1\)}] (p1) at (2,0) {};
  \node[place,label=below:{\(p_2\)}] (p2) at (4,0) {};
  \node[place,label=below:{\(p_3\)}] (p3) at (6,0) {};

  \node[transition,label=below:{\color{blue}\(a_{[-1]}\)},fill=blue!40!white] (t0) at (1,0) {{\(t_0\)}};

  \node[transition,label=right:{\color{red}\(b_{[+1]}\)},fill=red!40!white] (t) at (3,1) {{\(t\)}};
  \node[transition,label=left:{\color{violet}$c_{[+1]}$},fill=violet!40!white] (t') at (3,-1) {\(t'\)};
  \node[transition,%
  label=below:{%
    \begin{tabular}[t]{@{}l@{}}
      \color{blue}$a_{[+2]}$\\%
      \color{violet}$c_{[+2]}$%
    \end{tabular}
  },%
  fill=blue!40!white] (t1) at (5,0) {{\(t_1\)}};

  \draw[->] (p0) -- (t0);
  \draw[->] (t0) -- (p1);
  \foreach \p/\t/\q in {p1/t/p2,p2/t'/p1,p2/t1/p3}
  {
    \draw[->] (\p) --(\t);
    \draw[->] (\t) --(\q);
  };

\end{tikzpicture}
\qquad
  \begin{tikzpicture}[>=latex',semithick,scale=.8]
  \node[place,tokens=1,label=below:{\(p_0\)}] (p0) at (0,0){};
  \node[place,label=below:{\(p_1\)}] (p1) at (2,0) {};
  \node[place,label=below:{\(p_2\)}] (p2) at (4,0) {};
  \node[place,
  draw=none] (p3) at (6,0) {};

  \node[transition,label=below:{\color{blue}\(a_{[-1]}\)},fill=blue!40!white] (t0) at (1,0) {{\(t_0\)}};

  \node[transition,label=right:{\color{red}\(b_{[+1]}\)},fill=red!40!white] (t) at (3,1) {{\(t\)}};
  \node[transition,label=left:{\color{violet}$c_{[+1]}$},fill=violet!40!white] (t') at (3,-1) {\(t'\)};
  \node[transition,%
  label=below:{%
    \begin{tabular}[t]{@{}l@{}}
      \color{blue}$a_{[+2]}$\\%
      \color{violet}$c_{[+2]}$%
    \end{tabular}
  },%
  fill=blue!40!white,double] (t1) at (5,0) {{\(t_1\)}};

  \draw[->] (p0) -- (t0);
  \draw[->] (t0) -- (p1);
  \foreach \p/\t/\q in {p1/t/p2,p2/t'/p1
  }
  {
    \draw[->] (\p) --(\t);
    \draw[->] (\t) --(\q);
  };


  \draw[->] (p2) -- (t1) ;
    \draw[rounded corners,->,ultra thick,white] (t1) -- (p3.center) -- ([yshift=1.5ex]p3.north) -- ([yshift=1.5ex,xshift=2.5ex]p0.north) -- (p0.60);

  \draw[rounded corners,->,thin,double] (t1) -- (p3.center) -- ([yshift=1.5ex]p3.north) -- ([yshift=1.5ex,xshift=3.5ex]p0.north) -- (p0.north east);
\end{tikzpicture} 
  \caption{Restarting process example}
  \label{fig:restart-game-ex}
\end{figure}
For workflow nets, %
the variation amounts to identifying the final place with the initial place. 
The passage to the restart game is illustrated in \figref{fig:restart-game-ex}. 
Note that the restart game of our example is drastically different.  %
Player~\(c\) will be better off ``cooperating'' %
and never choosing the action~\(t'\), %
but instead idly reaping benefits %
by letting players~\(a\) and~\(b\) do the work. %

\subsection{Incentive alignment w.r.t.\ proper completion and full liveness}
We now formalize the idea that participants want to expect benefits 
from taking part in a collaboration %
if agents behave rationally -- %
the standard assumption of game theory. %
The proposed definition of incentive alignment is %
in principle of qualitative nature, %
but it hinges on quantitative information, %
namely the expected utility for %
each of the business partners of an inter\-organizational process. %


Let us consider a Petri net with payoffs \((\Net,\pay,\rl)\), %
e.g., the Petri net semantics of a \bpmn model. %
Incentive alignment %
amounts to existence of equilibrium strategies %
in the associated restart game~\(\rgame\) %
(as per Definition~\ref{def:restart-game}) %
that eventually will lead to positive utility for every participating player. %
The full details are as follows. %
\begin{definition}[Incentive alignment w.r.t.\ completion and full liveness]
  Given an autonomous correlation device \(\mathcal{D}\), %
  a correlated strategy profile~\(\sigma\) is \emph{eventually positive} %
  if there exists a natural number \(\nnull\in\mathbb{N}\) such that, %
  for all larger natural numbers \(n>\nnull\), %
  the expected payoff 
  of every player is positive, %
  i.e.,   for all \(\plr\in \N\), %
  \(\epay n (\mathcal{D},\M[0],\sigma)>0\). %
  Incentives in the net \((\Net,\pay,\rl)\) %
  are \emph{aligned} with
  \begin{itemize}
  \item \emph{proper completion} if, %
    for every positive real~\(\varepsilon>0\), %
    there exist  %
    an autonomous correlation device~\(\mathcal{D}\) and %
    an eventually positive correlated \(\varepsilon\)-equilibrium strategy profile~\(\sigma\) %
    of the restart game~\(\rgame\) such that, %
    for every natural number~\(\nnull\in\mathbb{N}\), %
    there exists a history \(h\in \hist{n}\) at stage~\(n> \nnull\) with %
    current state \(\state[h]=\M[0]\) that has %
    non-zero probability, i.e., \(\PP{\M[0]}{\sigma}(h) >0\);
  \item 
    \emph{full liveness} if, %
    for every positive real~\(\varepsilon>0\), %
    there exist
    an autonomous correlation device~\(\mathcal{D}\) and %
    an eventually positive correlated \(\varepsilon\)-equilibrium strategy profile~\(\sigma\) %
    of the restart game~\(\rgame\) %
    such that,
    for every transition~\(t \in \tr\),
    for every reachable marking~\(\M[']\), and %
    for every natural number~\(\nnull\in\mathbb{N}\), %
    there exists a history %
    \[h=\langle \M['],\m_1,\act_1, \dotsc,\state[n-1],\m_{n-1},\act_{n-1}, \state[n],\m_n \rangle \in \hist{n}
    \]
    at stage~\(n> \nnull\) 
    with %
    \(t \in \act_{n-1}\) and 
    \(\PP{\M[']}{\sigma}(h) >0\). %
  \end{itemize}
\end{definition}
Both variations of incentive alignment ensure that %
all participants can expect to gain profits on average, eventually;
moreover, 
something ``good'' will always be possible in the future %
where something ``good'' is either restart of the game (upon completion) or
additional occurrences of every transition. %

There are several interesting consequences. %
First,
incentive alignment w.r.t.\ full liveness implies %
incentive alignment
w.r.t.\ proper completion, %
for the case of safe conflict-free Petri nets %
where the initial marking is only reachable via the empty transition sequence,
including the following class of workflow nets.
\begin{definition}[Elementary workflow net]
  An \emph{elementary workflow net} is
  an elementary net system \(\Net = (\pl,\tr,\pre,\post)\)~\cite{RozenbergE96ENS}
  such that 
  \begin{itemize}
  \item %
    there exist unique places \({}i\in\pl\) and \({}o\in\pl\),
    such that \({}i\notin \post[t]\) and \({}o\notin \pre[t]\), %
    for all \(t\in\tr\), %
    which are called the \emph{initial} and \emph{final} place, resp., and %
  \item %
    each place is in the pre- or post-set of some transition, %
    i.e., \(\pl = \bigcup_{t\in \tr} (\pre[t]\cup\post[t])\).
  \end{itemize}
\end{definition}
\reversemarginpar\thnote{%
  simply because the first transition taken will occur again, %
which in turn means that the process with restart\\
IW: Yeah, I'm not getting that. Also, I'm not getting how proper completion works without saying ``for any reachable marking'' in some form or other. I guess, I cannot just stay in the initial marking and endlessly execute the idle actions, because that strategy is not positive?\\
TH: Exactly! 100\% \\
IW: Please add the explanation to the TR.
}\normalmarginpar%
Next, %
note that incentive alignment w.r.t.\ full liveness
implies the soundness property for safe elementary workflow nets.\iwnote{Brauchen wir nicht auch free-choice?} %
The main insight is that correlated equilibria can serve %
a very special case of strongly fair schedulers, %
not only for the case of a single player. %
However, %
we can even obtain a characterization of soundness %
in terms of incentive alignment w.r.t. full liveness, %
which makes precise in which sense we have extended soundness conservatively. %
\begin{theorem}[Characterization of soundness]
  \label{thm:da-thm}
    Let \(\Net\) be %
  a elementary  workflow net that is safe and 
  extended free choice;
  let \((\Net,\rl \colon T \to \{\Sigma\},\underline 1)\)
  be the net with transition payoffs and roles where %
  \(\Sigma\) is a unique role,
  \(\rl\colon T \to \{\Sigma\}\) is the unique total function, %
  and \(\underline 1\) is the constant utility-1 function. %
  The soundness property holds in \(\Net\) if, and only if, %
  incentives in \((\Net,\rl \colon T \to \{\Sigma\},\underline 1)\)  %
  are aligned w.r.t.\ full liveness. %

\end{theorem}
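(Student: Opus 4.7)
My approach is to view the restart game of Definition~\ref{def:restart-game} as a game-theoretic rendering of the classical short-circuited workflow net obtained by gluing the final place~$o$ back to the initial place~$i$, and then to invoke Van~der~Aalst's theorem that, for safe extended free-choice workflow nets, soundness is equivalent to liveness of the short-circuit. Under the theorem's assumptions---a single role~$\Sigma$ owning every transition, constant unit utility---each stage contributes at most~$1$ to~$\Sigma$'s total payoff, so the mean expected payoff at stage~$n$ equals~$\Sigma$'s firing rate, capped by~$1$; nature's payoff is identically zero, so I would read the ``eventually positive'' clause as vacuously holding for nature (otherwise the right-hand side would be always false and the theorem empty). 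With this dictionary in place, the plan is to match full liveness in the sense of incentive alignment with Petri-net liveness of the short-circuit.

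For the forward direction (soundness $\Rightarrow$ alignment), I would assume soundness, deduce liveness of the short-circuit, and build the correlation device~$\mathcal{D}$ from a round-robin schedule that, from every reachable marking, eventually prescribes firing every transition (such a schedule exists by liveness). The device's~$n$-th signal nominates the next transition, and the strategy~$\sigma^*$ has~$\Sigma$ always fire the nominated transition. Because~$\sigma^*$ never idles, its mean payoff per stage equals the global maximum of~$1$, so no deviation can improve by any~$\varepsilon>0$; eventual positivity and full liveness are then immediate from the round-robin design.

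For the converse (alignment $\Rightarrow$ soundness), I would take the strategy~$\sigma$ supplied by the hypothesis and observe that the full-liveness clause asserts: from every reachable marking~$\M[']$ and every transition~$t$, some history of the restart game started at~$\M[']$ fires~$t$ with strictly positive probability. Since the dynamics~$\mathring q$ simply execute a step of~$\Net$ (with the completion-to-$\M[0]$ redirect), this directly witnesses firability of~$t$ from~$\M[']$ in the original net, yielding Petri-net liveness of the short-circuited net; boundedness comes for free from safety, whence Van~der~Aalst's characterization closes the loop to soundness.

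The hard part is the option-to-complete clause of soundness in the converse, namely that~$\{o\}$ is reachable from every reachable marking~$\M[']$. My plan here is to select a transition~$t^*$ whose firing from its preset enters the final marking, and apply full liveness at~$\M[']$ to~$t^*$: the resulting history exhibits a firing sequence~$\M['] \fire{}^* \pre[t^*] \fire{t^*} \{o\}$ in~$\Net$. The existence of such a~$t^*$, as well as the guarantee that the supplied strategy truly reflects a completion sequence rather than merely cycling within a sub-graph not reaching~$\{o\}$, is the delicate step; I expect the extended free-choice assumption to be essential here, via the standard fact that in such nets conflict resolution is ``local'' enough for liveness of the short-circuit to enforce reachability of the final marking throughout.
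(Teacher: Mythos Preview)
Your observation about nature is well-taken (with $u^\bot\equiv 0$ the ``eventually positive'' clause can only be read as ranging over roles), but beyond that your plan diverges from the paper in both directions, and each has a gap.

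For soundness $\Rightarrow$ alignment, the paper does \emph{not} encode a schedule in the device. It uses the trivial one-signal device and lets $\Sigma$ play the stationary strategy that, at each marking, picks uniformly at random among the enabled transitions. This is dominant (every non-idle step scores~$1$, the maximum), hence an $\varepsilon$-equilibrium for all $\varepsilon>0$; and from any reachable $\M[']$ every finite firing sequence of the restart dynamics has strictly positive probability, so full liveness follows immediately from soundness. Your round-robin device runs into the autonomy constraint: $\mathcal{D}$ cannot observe the marking, yet the full-liveness clause requires the \emph{same} $(\mathcal{D},\sigma)$ to work when the game is started at an arbitrary reachable $\M[']$. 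A deterministic schedule tuned to the trajectory from $\M[0]$ will nominate non-enabled transitions when started elsewhere, and ``always fire the nominated transition'' is then not a legal strategy.

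For alignment $\Rightarrow$ soundness, your appeal to Van der Aalst's liveness-and-boundedness characterisation needs liveness of the added short-circuit transition as well, and that \emph{is} the option to complete---so the reduction saves nothing. Your direct attack then chooses the wrong witness: a transition with $o$ in its post-set may fire into a marking strictly larger than $\{o\}$, which does not certify that $\{o\}$ itself is reachable. The paper's key move is to pick instead a transition $t_0$ \emph{enabled at $\M[0]=\{i\}$}, so $\pre[t_0]=\{i\}$. Since $i$ lies in no post-set, the only reachable marking containing $i$ is $\M[0]$. Applying full liveness to $t_0$ from $\M[']$ yields a positive-probability history whose state just before firing $t_0$ is $\M[0]$; but the restart game can revisit $\M[0]$ from $\M[']\neq\M[0]$ only via the redirect from a final marking, and the unique reachable final marking is $\{o\}$. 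This gives option to complete directly; proper completion then follows from safety alone (no free choice needed for this implication).
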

The proof can be found in 
\ifShort{the extended version \cite[Appendix A]{extendedVersion}}%
\ifLong{Appendix~\ref{apx:proof}}%
.

Finally, %
the reader may wonder why we consider the restarting game. %
Indeed, for Petri nets that do not have any cycles, %
one could formalize the idea of incentive alignment 
using finite extensive form games 
for which correlated equilibria have been studied
as well \cite{Stengel2008ExtensiveFormCE}.
However, %
%
%
this alternative approach is only natural for
Petri nets with transition payoffs and roles, but \emph{without cycles}. 
%
In the present paper we have opted for a general approach,
which does not impose the rather strong restriction on nets to be acyclic.
Also let us emphasize that the restart \emph{games} are merely a means to an end.
We generate them from \emph{arbitrary} free-choice safe Petri nets with transition payoffs and roles,
which are the main objects of the theoretical result.

\section{Conclusions and future work}
\label{sec:concl}

We have described a game theoretic perspective on %
incentive alignment of inter\hyp{}organizational business processes. %
It applies to \bpmn collaboration models %
that have annotations for activity-based utilities for all roles. 
The main theoretical result is that %
incentive alignment is a conservative extension of %
the soundness property, %
which means that we have described a uniform framework
that applies the same principles to %
intra- and inter-organizational business processes. %
We have illustrated incentive alignment for %
the example of the order-to-cash process and an additional example %
that is tailored to illustrate the game theoretic element of mediators. %


The natural next step is the implementation of a tool chain %
that takes a \bpmn collaboration model with annotations, %
transforms it into a Petri net with transition payoffs and roles, %
which in turn is analyzed concerning incentive alignment, %
e.g., using algorithms for solving stochastic games~\cite{MacDermedIsbell2011AIII}. %
\ifLong{One might even consider to extend \textsc{prism}~\cite{kwiatkowska2016prism}
or the model checker \textsc{storm} \cite{DBLP:journals/corr/abs-2002-07080}.} 
A very challenging venue for future theoretical work is
the extension to the analysis of interleaved execution of %
several instances of a process.




\bibliographystyle{splncs04etal}
\bibliography{bibliography}

\begin{thebibliography}{10}
\providecommand{\url}[1]{\texttt{#1}}
\providecommand{\urlprefix}{URL }
\providecommand{\doi}[1]{https://doi.org/#1}

\bibitem{vanderAalst2011}
Van~der Aalst, W.M.P., Van~Hee, K.M., ter Hofstede, A.H.M., et~al.: Soundness
  of workflow nets: classification, decidability, and analysis. Formal Asp.
  Comp.  \textbf{23}(3),  333--363 (2011)

\bibitem{Aalst97soundness}
Van~der Aalst, W.M.P.: Verification of workflow nets. In: Application and
  Theory of Petri Nets, {ICATPN} (1997)

\bibitem{Aumann1974CorEq}
Aumann, R.J.: Subjectivity and correlation in randomized strategies. Journal of
  Mathematical Economics  \textbf{1}(1),  67--96 (1974)

\bibitem{cachon2004game}
Cachon, G.P., Netessine, S.: Handbook of Quantitative Supply Chain Analysis,
  chap. Game Theory in Supply Chain Analysis, pp. 13--65. Springer (2004)

\bibitem{Cartelli:2014}
Cartelli, V., Di~Modica, G., Manni, D., Tomarchio, O.: A cost-object model for
  activity based costing simulation of business processes. In: European
  Modelling Symposium (2014)

\bibitem{dijkman2008semantics}
Dijkman, R.M., Dumas, M., Ouyang, C.: Semantics and analysis of business
  process models in bpmn. Information and Software technology  \textbf{50}(12),
   1281--1294 (2008)

\bibitem{Fundamentals-book}
Dumas, M., Rosa, M.L., Mendling, J., Reijers, H.A.: Fundamentals of Business
  Process Management. Springer, 2nd edn. (2018)

\bibitem{DBLP:journals/corr/abs-2002-07080}
Hensel, C., Junges, S., Katoen, J., Quatmann, T., Volk, M.: The probabilistic
  model checker storm. CoRR  \textbf{abs/2002.07080} (2020),
  \url{https://arxiv.org/abs/2002.07080}

\bibitem{HerbertSharp12StochMC}
Herbert, L., Sharp, R.: Using stochastic model checking to provision complex
  business services. In: IEEE Intl. Symp. High-Assurance Systems Engineering
  (2012)

\bibitem{correlatedEqu87Aumann}
J.~Aumann, R.: Correlated equilibrium as an expression of {Bayesian}
  rationality. Econometrica  \textbf{55}(1),  1--18 (1987)

\bibitem{Jaskiewicz2017handbook}
Ja{\'{s}}kiewicz, A., Nowak, A.S.: Non-Zero-Sum Stochastic Games, pp. 1--64.
  Springer (2017)

\bibitem{AMA3rdEdition}
Kaplan, R., Atkinson, A.: Advanced Management Accounting. Prentice Hall, 3rd
  edn. (1998)

\bibitem{kwiatkowska2018automated}
Kwiatkowska, M., Norman, G., Parker, D., Santos, G.: Automated verification of
  concurrent stochastic games. In: Intl. Conf. Quantitative Evaluation of
  Systems. pp. 223--239 (2018)

\bibitem{kwiatkowska2016prism}
Kwiatkowska, M., Parker, D., Wiltsche, C.: {PRISM}-games 2.0: A tool for
  multi-objective strategy synthesis for stochastic games. In: International
  Conference on Tools and Algorithms for the Construction and Analysis of
  Systems (2016)

\bibitem{leyton2008essentials}
Leyton-Brown, K., Shoham, Y.: Essentials of game theory: A concise
  multidisciplinary introduction. Synthesis lectures on AI and ML
  \textbf{2}(1),  1--88 (2008)

\bibitem{MacDermedIsbell2009NIPS}
MacDermed, L., Isbell, C.L.: Solving stochastic games. In: Conf. Neural
  Information Processing Systems. pp. 1186--1194 (2009)

\bibitem{MacDermedIsbell2011AIII}
MacDermed, L., Narayan, K.S., Isbell, C.L., Weiss, L.: Quick polytope
  approximation of all correlated equilibria in stochastic games. In: {AAAI}
  Conference (2011)

\bibitem{2018-Mendling-TMIS}
Mendling, J., Weber, I., Van~der Aalst, W.M.P., et~al.: Blockchains for
  business process management -- challenges and opportunities. ACM Transactions
  on Management Information Systems (TMIS)  \textbf{9}(1),  4:1--4:16 (Feb
  2018)

\bibitem{morgenstern1953theory}
Morgenstern, O., von Neumann, J.: Theory of games and economic behavior.
  Princeton university press (1953)

\bibitem{NarayananAnanth2004SupplyChainIA}
Narayanan, V., Raman, A.: Aligning incentives in supply chains. Harvard
  Business Review  \textbf{82},  94--102, 149 (12 2004)

\bibitem{RubinsteinOsborne1994book}
Osborne, M.J., Rubinstein, A.: A course in game theory. MIT press (1994)

\bibitem{PrasadEtAl15NashEqu}
Prasad, H., L., P., Bhatnagar, S.: Two-timescale algorithms for learning {Nash}
  equilibria in general-sum stochastic games. In: Intl. Conf. Auton. Agents and
  Multiagent Systems (2015)

\bibitem{RozenbergE96ENS}
Rozenberg, G., Engelfriet, J.: Elementary net systems. In: Lectures on Petri
  Nets {I:} Basic Models, Advances in Petri Nets, held in Dagstuhl. pp. 12--121
  (Sep 1996)

\bibitem{Shapley1095StochGames}
Shapley, L.S.: Stochastic games. Proc. Nat. Academy of Sciences
  \textbf{39}(10),  1095--1100 (1953)

\bibitem{Solan2002CorEq}
Solan, E., Vieille, N.: Correlated equilibrium in stochastic games. Games and
  Economic Behavior  \textbf{38}(2),  362--399 (2002)

\bibitem{SolanVieille2015PNAS}
Solan, E., Vieille, N.: Stochastic games. Proceedings of the National Academy
  of Sciences  \textbf{112}(45),  13743--13746 (2015)

\bibitem{Stengel2008ExtensiveFormCE}
von Stengel, B., Forges, F.: Extensive-form correlated equilibrium: Definition
  and computational complexity. Math. Oper. Res.  \textbf{33},  1002--1022
  (2008)

\bibitem{2016-Weber-BPM}
Weber, I., Xu, X., Riveret, R., et~al.: Untrusted business process monitoring
  and execution using blockchain. In: BPM (2016)

\bibitem{Weske-book}
Weske, M.: Business Process Management -- Concepts, Languages, Architectures.
  Springer, 3rd edn. (2019)

\bibitem{2019-Blockchain-Book}
Xu, X., Weber, I., Staples, M.: Architecture for Blockchain Applications.
  Springer (2019)

\end{thebibliography}
\appendix
\ifLong{%
  \newcommand{\apx}{\section}
\clearpage\apx{The conservative extension theorem}%
\label{apx:proof}%
\newcommand{\thefin}{[o]}
\newcommand{\theini}{[i]}

Let us quickly recall the definition of soundness. According to \cite{vanderAalst2011}\footnote{Emphasis is taken from the source.}, an (elementary) workflow net is
\begin{quote}
  sound if and only if the following three requirements are satisfied:
(1)~\emph{option to complete}:
for each case it is always still possible to reach the state which just marks
place \emph{end}, (2)~\emph{proper completion}:
if place \emph{end} is marked all other places are empty for a given case, and
(3)~\emph{no dead transitions}: it should be possible to execute an arbitrary activity by following the appropriate route
\end{quote}
where \emph{each case} means each reachable marking, %
\emph{state} means marking, 
\emph{marked} means marked by a reachable marking, %
\emph{activity} means \emph{transition}, and %
\emph{following the appropriate route} means
after executing the appropriate firing sequence
(see also \cite[Definition~5.1]{vanderAalst2011}).



\begin{lemma}[Proper completion from  safety and option to complete]
  \label{lem:proper-completion}
  If an elementary workflow net  is safe, 
  the option to complete implies proper completion. 
\end{lemma}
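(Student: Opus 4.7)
The plan is to argue by contradiction. Suppose some reachable marking $\M$ has $\M(o) \geq 1$ and $\M(p) \geq 1$ for some place $p \neq o$. Safety yields $\M(o) = 1$. The option to complete then provides a firing sequence $\sigma$ with $\M \fire{\sigma} \thefin$, and I would exploit the structural property $o \notin \pre[t]$ for every $t \in \tr$ to extract a contradiction.

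The first key observation is that no transition appearing in $\sigma$ can have $o$ in its post-set. Indeed, since $o$ never belongs to any pre-set, the one token residing on $o$ in $\M$ is preserved in every marking reached along $\sigma$; firing a transition that also deposits a token on $o$ would leave a marking with $\M'(o) = 2$, contradicting safety of the (reachable, hence safe) intermediate marking. Thus every transition in $\sigma$ touches only places of $\pl \setminus \{o\}$.

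The second step is to strip the $o$-token: let $\M' \defeq \M - \thefin$, so that $\M'(p) \geq 1$ and in particular $\M'$ is non-empty. Because no transition in $\sigma$ needs a token on $o$ (as $o \notin \pre[t]$ for all $t$), the sequence $\sigma$ remains fireable from $\M'$; moreover, every intermediate marking from $\M'$ differs from the corresponding intermediate marking from $\M$ by exactly the missing token on $o$, which cannot break 1-safety. Using the observation from the previous step, firing $\sigma$ from $\M'$ reaches $\thefin - \thefin = \varnothing$.

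The final step extracts the contradiction from the fact that $\varnothing$ is unreachable from any non-empty marking. This uses the definitional constraint that each transition has a non-empty post-set: if $\M' \fire{t_1 \cdots t_k} \varnothing$ with $k \geq 1$, then the last marking contains $\post[t_k] \neq \varnothing$, which is impossible; and $k = 0$ contradicts $\M' \neq \varnothing$. The only bookkeeping obstacle worth double-checking is the transfer of fireability from $\M$ to $\M'$ along $\sigma$, which I would carry out by induction on the length of $\sigma$, using $o \notin \pre[t]$ at each step for enabledness and the monotonicity of 1-safety under token removal for the safety of intermediate markings.
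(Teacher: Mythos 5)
Your proof is correct, but it takes a genuinely different route from the paper. The paper argues by induction on the length of the completing firing sequence: it shows that no \emph{non-empty} firing sequence can lead from a reachable marking that covers $o$ to the final marking $[o]$, because the marking keeps covering $o$ along the way (as $o\notin\pre[t]$ for all $t$) and the step reaching $[o]$ would have to deposit a second token on $o$, contradicting safety; hence the completing sequence provided by the option to complete must be empty, i.e.\ the marking already equals $[o]$. You instead subtract the token on $o$, replay the \emph{same} sequence from $\M' = \M - [o]$ (enabledness transfers since $o$ never occurs in a pre-set, and the two runs differ in every intermediate marking by exactly the token on $o$), and conclude that the replay ends in the empty marking --- impossible, since every transition has a non-empty post-set, so any fired transition leaves its post-places marked, and the zero-length case contradicts $\M'\neq\varnothing$. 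Your argument buys a more structural, arguably slicker contradiction (empty marking unreachable), at the price of invoking the non-emptiness of post-sets from the net definition and a replay/difference lemma that, to be fully formal, needs its own induction on the sequence length --- so the bookkeeping effort is comparable to the paper's induction. Two small remarks: your appeal to ``monotonicity of 1-safety under token removal'' is not actually needed, since under the paper's firing rule enabledness never depends on output places being empty, so safety of the $\M'$-run plays no role; and your step~1 (no transition of the sequence produces on $o$) is likewise dispensable, because even without it the two runs differ by exactly one token on $o$ throughout, which already forces the replay to end in $\varnothing$. Neither observation affects correctness.
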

\begin{proof}
  Let \(\M[']\) be a reachable marking that covers the final place, %
  i.e., \({}o \in \M[']\). %
  The existence of %
  an option to complete implies that %
  there is a firing sequence from the marking~\(\M[']\) to the final marking, %
  i.e., a firing sequence of the form.
  \[
    \M['] \fire{t_1} \M[1] \fire{t_2} \cdots \M[n]\fire{t_{n}}\thefin
  \]
  If this firing sequence is empty, %
  i.e., if~\(n=0\), %
  we obtain the desired, %
  because then we have already started from the final marking,
  i.e., \(\M[']=\thefin\). %
  It remains to show that, %
  for every reachable marking~\(\M[']\) that covers~\({}o\), %
  there cannot be any (non-empty) firing sequence 
  \[
    \M['] \fire{t_1} \M[1] \fire{t_2} \cdots \M[n]\fire{t_{n}}\thefin
  \]
  of length~\(n\geq 1\).%
  We shall proceed by induction on~\(n\geq 1\).  
  \begin{description}
  \item[\(n=1\)]
    Let \(\M[']\) be a reachable marking such that \(o \in \M[']\) %
    and assume that \(\M['] \fire{t_1} \thefin\) %
    (to derive a contradiction).
    Because the net is an elementary net system, %
    the transition~\(t_1\) consumes a token from the marking~\(\M[']\) %
    and produces a token in the final place. %
    However %
    this would mean that the net is not safe %
    as the marking \(\M[']\)~covers the final place already %
    and \(t_1\) cannot consume a token from the final place,
    by the definition of elementary workflow net. %
    This however leads to a contradiction to the assumption of safety.
    Hence, %
    there cannot be a firing sequence of length one from %
    the marking~\(\M[']\) to the final marking. %
  \item [\(n\leadsto n+1\)]
    The induction hypothesis is that,
    for every  reachable marking~\(\M[']\) with \(o \in \M[']\), %
    there cannot be any firing sequence 
    \[
      \M['] \fire{t_1} \M[1] \fire{t_2} \cdots \M[n]\fire{t_{n}}\thefin
    \]
    of length~\(n\geq1\) %
		to the final marking. %
    Let \(\M['']\)~be a reachable marking that such that~\(o \in \M['']\) %
    and assume that there exists a firing sequence
    \[
     \M[''] \fire{t'_1} \M[1]' \fire{t'_2} \cdots \M[n+1]'\fire{t'_{n+1}}\thefin
    \]
    (to derive a contradiction). %
    Note that %
    \(\M[1]'\) still covers the final place (as \({}o \notin \pre[t'_1]\)) %
    and is reachable. %
    This however leads to a contradiction to the induction hypothesis, %
    which states that %
    there cannot by any firing sequence of the form
    \[
      \M[1]' \fire{t'_2} \cdots \M[n+1]'\fire{t'_{n+1}}\thefin
    \]
    with \(\M[1]'\) reachable and covering the final place. %
    Thus, %
    we have derived a contradiction. %
  \end{description}
  Finally, %
  by the principle of complete induction, %
  there cannot be any firing sequence 
  \[
    \M['] \fire{t_1} \M[1] \fire{t_2} \cdots \M[n]\fire{t_{n}}\thefin
  \]
  of length~\(n\) %
  for \(n>0\). %
  Hence \(n=0\) is the only relevant case, %
  which we have already covered, %
  and thus the proof is complete. %
  \qed
\end{proof}


We now provide the proof of Theorem~\ref{thm:da-thm}.
To avoid clutter, %
we denote the restart game by \(\game\). %

\medskip
\par\noindent \textbf{Theorem 1 (Characterization of soundness).}
\emph{}
\begin{proof}
  Let \(\Net\) be an %
  elementary workflow net that is safe and extended free-choice
  with initial marking~\(\theini\). %
  This implies %
  that the set of transitions is non-empty, %
  because the final place has to be an element of some post-set, %
  by definition. %

  First, assume that %
  the net \((\Net,u,\rho)\) with initial marking~\(\theini{}\) is %
  incentive aligned w.r.t.\ full liveness. %
  By definition of the latter, %
  for some specific choice of \(\varepsilon>0\) %
  (e.g., \(\varepsilon=0.5\)), %
  there exists an eventually positive,
  correlated \(\varepsilon\)-equilibrium~\(\sigma\) %
  of the {restart game} \(\game\) %
  such that,
  for every transition~\(t \in \tr\), %
  every reachable marking~\(\M[']\), and  %
  for every natural number~\(\nnull\in\mathbb{N}\), %
  there exists a history %
  \begin{equation}
    \label{eq:da-history}
    h=\langle \M['],\m_1,\act_1, \dotsc,\state[n-1],\m_{n-1},\act_{n-1}, \state[n],\m_n \rangle \in \hist{n}
  \end{equation}
  at stage~\(n> \nnull\) with %
  \(\act_{n-1}=\{t\}\)
  (using that there are no parallel steps allowed in the restart game, and there is only one role) 
  and \(\PP{\M[']}{\sigma}(h) >0\). %

  We first address the non-existence of \textbf{dead transitions}. %
  Let \(t\in\tr\)~be a transition. %
  We have to show that~\(t\) it is not dead. i.e., %
	will eventually be fired. %
  Now,
  \(\M[0]=\theini{}\) is a reachable marking %
  and \(1\)~is a natural number. %
  Hence, %
  by the above property of~\(\sigma\), %
  there exists a history
  \begin{equation}
    \label{eq:da-history-no-dead}
    h=\langle \M[0],\m_1,\act_1, \dotsc,\state[n-1],\m_{n-1},\act_{n-1}, \state[n],\m_n \rangle \in \hist{n}
  \end{equation}
  at stage \(n>0\) with
  with \(\act_{n-1}=\{t\}\) and \(\PP{\M[0]}{\sigma}(h) >0\). %
  Let~\(k<n\) be the largest index such that \(\state[k] = \M[0]\). %
  Thus, %
  in the Petri net~\(\Net\), %
  there is a firing sequence 
  \[
    \theini{}=\M[0]=\state[k] \fire{\act_k}  \cdots \state[n-1]\fire{\act_{n-1}}\state[n] 
  \]
  such that~\(\act_{n-1}=\{t\}\); %
  hence, %
  \(\state[n-1]\fire{t}\) and \(\state[n-1]\) is reachable. %
  As the transition~\(t\) was arbitrary, %
  incentive alignment w.r.t. full liveness implies that
	there are no dead transitions. %

  Concerning the \textbf{option to complete}, %
  let \(\M[']\)~be an arbitrary reachable marking. %
  For the option to complete,
  we want to show that there exists
  a firing sequence
  \[\M[']\fire{t_1}\cdots\M[n]\fire{t_n}\thefin\]
  from the reachable marking~\(\M[']\) to the final marking~\(\thefin\). %
  First, %
  note that we can assume w.l.o.g.\ that \(\M[']\neq\thefin\) %
  (as otherwise the empty firing sequence suffices). %
  Hence, %
  we only need to consider the case in which \(\M[']\neq\thefin\). %
  We proceed by choosing a transition~\(t_0\in\tr\) that is enabled at the initial marking, %
  i.e., \(\theini{}\fire{t_0}\); %
  such a transition must exist %
  since the net~\(\Net\) is an elementary net system.
  Also,  \(2\)~is a natural number.\thnote{2 is magic \tt\^{}\_\^{}.} %
  Hence, %
  by the above property of~\(\sigma\), %
  there exists a history
  \begin{equation}
    \label{eq:da-history-again}
    h=\langle \M['],\m_1,\act_1, \dotsc,\state[n-1],\m_{n-1},\act_{n-1}, \state[n],\m_n \rangle \in \hist{n}
  \end{equation}
  at stage~\(n> 2\) with %
  \(\act_{n-1}=\{t_0\}\)
  and \(\PP{\M[']}{\sigma}(h) >0\).%
  \iwnote{aber t0 ist doch die aller erste Transition, as per above?\\
    \ul{TH} %
    restart game for the win \\
    \ul{IW} Jaja, schon, aber es kann doch m' = [i] sein? und dann war noch nix restart, und das Argument greift nicht\\
    \ul{TH} \emph{aiiiii} -- index \(\pm1\)-Fehler !!! Da hab ich mich bei den indices um 1 vertan.
    Nach der Änderung darf auch \(\M[']=\M[0]=\theini\) gelten. 
  }
  Hence, %
  \(\state[n-1] = \M[0] = \theini\), %
  because the net is an elementary workflow net and \(\pre[t_0]=\theini=\M[0]\). %
  Thus, %
  there exists a smallest index~\(\ell>1\) such that
  \(\state[\ell]=\M[0] =\theini\). 
  Hence, %
  there is a non-empty firing sequence %
  \[
    \M['] \fire{\act_1}  \cdots \state[\ell-1]\fire{\act_{\ell-1}}\thefin
  \]
  in the Petri net~\(\Net\), %
  by definition of the restart game and because \(\thefin\) is the only final marking.  
  Thus, %
  we have shown that options to complete exist. %
  Proper completion follows by Lemma~\ref{lem:proper-completion}. 


  \textbf{Conversely}, %
  assume that the net~\(\Net\) satisfies the soundness property. %
  We choose a correlation device~\(\mathcal{D}\) that gives the same signal %
  (no matter the history). %
  Thus, %
  \(\sig{\Sigma}{n}=\{\top\}\) where \(\top\) is an arbitrary unique signal %
  and the probability distributions \(\sigd[n]\vect \m {n-1}\)
  concentrate all probability mass on the unique signal~\(\m=\langle\top\rangle\). %
  
  First, note that 
  if player \(\Sigma\) always chooses some transition as action, %
  or, equivalently, never chooses to idle, %
  we obtain a strategy profile for the correlation device, %
  which is an eventually positive \(\varepsilon\)-equilibrium strategy profile.%
  \footnote{
    These strategies are the best ones existing %
    or \emph{dominant}, %
    in game theoretic vernacular. 
  }
  \ Here, %
  we use the fact that the restart game allows %
  the player to execute at most one transition at a time. %
  Now, %
  relying on the option to complete, 
  we shall construct a strategy~\(\sigma\) that %
  only considers the current state; %
  such strategies are called \emph{stationary}. %
  Thus, %
  let~\(\sigma\) be \emph{the} strategy of player~\(\Sigma\) that chooses, uniformly at random, %
  one of the transitions that are enabled at the current state. %
  More formally, %
  if we put \(L_{\M} = \{ t \in \tr \mid \M\fire{t}\}\),
  for any reachable marking~\(\M\), %
  then~\(\sigma\) is characterized by the equation \[\sigma^\Sigma_h \{t\} =
    \begin{cases}
      \frac{1}{L_{\state[h]}} & \text{if } \state[h] \fire{t} \\
      0 & \text{otherwise},
    \end{cases}
  \]
  for every observation~\(h\). %
  
  Now, %
  we claim that the strategy~\(\sigma\) yields a ``uniform'' witness for %
  incentive alignment w.r.t.\ full liveness, %
  i.e., that for every \(\varepsilon>0\), %
  the strategy profile \(\langle\sigma\rangle\) is an %
  eventually positive correlated \(\varepsilon\)-equilibrium %
  such that,  %
  for all transitions \(t \in \tr\), %
  every reachable marking~\(\M[']\), and %
  each natural number~\(\nnull \in \mathbb{N}\), %
  there exists a history %
  \[h=\langle \M['],\m_1,\act_1, \dotsc,\state[n-1],\m_{n-1},\act_{n-1}, \state[n],\m_n \rangle \in \hist{n}
  \]
  at stage~\(n> \nnull\)  with %
  \(t \in \act_{n-1}\) and \(\PP{\M[']}{\sigma}(h) >0\). %
  
  Thus, %
  let \(\varepsilon >0\)~be a positive real, %
  let \(t\in\tr\)~be a transition, %
  let \(\M[']\)~be a reachable marking, and %
  let \(\nnull\in\mathbb{N}\)~be a natural number. %
  We first show that there exist 
  \begin{enumerate}
  \item %
    a history~\(h'\) that starts from \(\M[']\) and leads to~\(\theini{}\), %
  \item %
    a history~\(h''\) that starts from \(\theini{}\), also leads to~\(\theini{}\), and is non-empty, and
  \item 
    a  history~\(h'''\) that starts from \(\theini{}\) and leads to a marking that enables the transition~\(t\) %
  \end{enumerate}
  and all these histories are ``possible'', %
  i.e., \(\PP{\M[']}{\sigma}(h') >0\),
  \(\PP{\theini}{\sigma}(h'') >0\), %
  and \(\PP{\theini}{\sigma}(h''') >0\). %
  \paragraph{Back to the initial marking} 
  By construction of~\(\sigma\), %
  using the option to complete and Lemma~\ref{lem:proper-completion}, %
  there is a history~\(h'\) of the form %
  \[
    h'=\langle \M['],\m_1',\act_1', \dotsc,\state[k-1]',\m_{k-1}',\act_{k-1}', \state[k]',\m_k' \rangle \in \hist{k}
  \]
  with \(\PP{\M[']}{\sigma}(h') >0\) %
  such that
  \(\state[k]' = \theini{}\), %
  \(\state[l]'\neq \theini{}\) for all~\(l<k\), %
  and \(k>0\). %
  \paragraph{Looping on the initial marking}
  There is an initial transition~\(t_0\), %
  i.e., one such that \(\pre[t_0]=\theini{}\); %
  let \(\M['']\) be the marking reached after firing \(t_0\), %
  i.e., \(\theini{}\fire{t_0}\M['']\). %
  Now, %
  by the same argument as for~\(h'\), %
  we obtain
  a history~\(h''\) of the form %
  \[h''=\langle \M[''],\m_1'',\act_1'', \dotsc,\state[r-1]'',\m_{r-1}'',\act_{r-1}'', \state[r]'',\m_r'' \rangle \in \hist{r}
  \]
  with \(\PP{\M['']}{\sigma}(h'') >0\) %
  such that 
  \(\state[r]'' = \theini{}\), %
  \(\state[j]''\neq \theini{}\) for all~\(j<r\), %
  and \(r>0\). %
  \paragraph{Reaching the transition}
  Finally, %
  by construction of~\(\sigma\), %
  and using the absence of dead transitions, %
  there is a history~\(h'''\) of the form %
  \[h'''=\langle \M['''],\m_1''',\act_1''', \dotsc,\state[n-1]''',\m_{n'-1}''',\act_{n'-1}''', \state[n']''',\m_{n'}''' \rangle \in \hist{n'}
  \]
  with \(\PP{\M[''']}{\sigma}(h''') >0\) %
  such that \(\{t\} = \act_{n'-1}'''\). %

  \paragraph{Putting histories together}
  Finally, %
  as the strategy~\(\sigma\) is stationary, %
  i.e., it does not distinguish between histories with the same current state, %
  we can concatenate\thnote{%
    yes,  technically this concatenation is not defined.
    Can we leave this as the test whether a reviewer read the proof to the end? \\
		\ul{IW} I think that is basic / obvious enough.
  }~\(h'\), %
  enough copies of~\(h''\), %
  and the third history \(h'''\) %
  to obtain a history~\(h\) %
  as desired. %

  As \(\sigma\) always chooses some transition to execute next, %
  it is a dominant strategy and thus in particular an \(\varepsilon\)-equilibrium; %
  moreover, %
  it is eventually positive since it reaches average utility of~\(1\). %
  \qed
\end{proof}

}%

\color{white}
\begin{iflong}
  \smash{\makebox[0pt][l]{if long}}
\end{iflong}
\begin{ifshort}
  \smash{\makebox[0pt][l]{if short}}
\end{ifshort}
\end{document}